\tikzset{snake it/.style={decorate, decoration=snake}}
\newtheorem{theorem}{Theorem}[section]
\newtheorem{definition}[theorem]{Definition}
\newtheorem{remark}[theorem]{Remark}
\newtheorem{lemma}[theorem]{Lemma}
\newtheorem{cor}[theorem]{Corollary}
\newtheorem{fact}[theorem]{Fact}
\newtheoremstyle{named}{}{}{\itshape}{}{\bfseries}{.}{.5em}{\thmnote{#3}}
\theoremstyle{named}
\newcommand{\poly}{\operatorname{poly}}
\newcommand{\supp}{\operatorname{supp}}
\newcommand{\mcP}{\mathcal{P}}
\newcommand{\mcC}{\mathcal{C}}
\newcommand{\mcB}{\mathcal{B}}
\newcommand{\mcQ}{\mathcal{Q}}
\newcommand{\mcE}{\mathcal{E}}
\newcommand{\mcF}{\mathcal{F}}
\newcommand{\mfe}{\mathrm{MF}}
\newcommand{\rmfe}{\mathrm{RMF}}
\newcommand{\mbF}{\operatorname{\mathbb{F}}}
\newcommand{\bs}[1]{\boldsymbol{#1}}
\newcommand{\rowspan}{\operatorname{rowsp}}
\title{Good binary quantum codes with transversal CCZ gate}
\author{Quynh T. Nguyen\thanks{Harvard University. \href{mailto:qnguyen@g.harvard.edu}{qnguyen@g.harvard.edu}}}
\begin{document}

\maketitle

\begin{abstract}
    We give an asymptotically good family of quantum CSS codes on qubits with a transversal $CCZ$ gate, meaning that the parallel logical $CCZ$ on all logical qubits is performed by parallel physical $CCZ$s on (a subset of) physical qubits. The construction is based on the observation that any classical code satisfying a multiplication property can be used to construct a quantum CSS code with transversal (qudit) $CCZ$. To obtain a constant-rate and linear-distance family, we then instantiate this construction with a classical good family of algebraic-geometry codes on a non-binary, but constant-sized, alphabet. Finally, we use a technique from the arithmetic secret sharing literature to reduce the alphabet to binary. As a corollary, the constructed code family provides a magic state distillation scheme with constant space overhead.
\end{abstract}

\section{Introduction}
Transversal gates are the simplest way to fault-tolerantly perform encoded operations on a quantum code. While there are many quantum code families supporting transversal Clifford gates, constructing a family supporting non-Clifford gates such as $T$ or $CCZ$ appears to be much more difficult, even for non-LDPC codes. To our knowledge, there have been no known constant-rate quantum code families with a transversal non-Clifford gate and having growing distance~\cite{bravyi2005universal, zeng2007transversality, bravyi2012magic, campbell2012magic, campbell2014enhanced, haah2018towers, vasmer2022morphing, haah2017magic, hastings2018distillation, haah2018codes, anwar2012qutrit, vuillot2022quantum}. Such codes would be useful for magic state distillation (MSD), and constructing them would perhaps be a good step towards finding good quantum LDPC code with transversal non-Clifford~\cite{zhu2023non}.

In a prior work, Krishna and Tillich~\cite{krishna2019towards} constructed a constant-rate and linear distance qudit code family that supports a transversal qudit non-Clifford gate; however, their code has the qudit dimension growing with the code size. For the qubit case, the previously best asymptotic result was due to Hastings and Haah~\cite{hastings2018distillation}, who constructed a vanishing rate and sublinear distance code family, that implies a state-of-art MSD overhead of $O(\log^{0.68}(1/\varepsilon))$. Most attempts to construct quantum codes with non-Clifford gate are based on classical algebraic code constructions, and are often formulated in the triorthogonality framework. More generally, these classical codes have the so-called multiplication property, meaning that the component-wise product of two codewords are codewords of some relevant code. For example, in the Reed-Solomon (RS) code, multiplying two codewords is the same as multiplying two degree-$d$ polynomials, yielding a degree-$2d$ polynomial, which is a codeword in a larger RS code. Furthermore, it is well-known that the dual of an RS code is an RS code with known degree. These two facts were used in~\cite{krishna2019towards} in a clever way to obtain the quantum code mentioned above. The issue with RS codes is that they have growing alphabet. And thus it is natural to ask if there is already a solution to this in the classical coding literature. Goppa's algebraic-geometry (AG) codes~\cite{goppa1981codes} are indeed known to address this issue, leading to improvements in a wide range of applications in cryptography, algebraic complexity theory, distributed computation etc. wherever a fixed-size alphabet is preferred~\cite{couvreur2021algebraic}. AG codes generalize both Reed-Solomon and Reed-Muller codes. Furthermore, they still have the multiplication property: the product of two AG code is included in, and in many situations is equal to, an AG code. They behave well under duality: the dual of an AG code is an AG code. Given this ubiquity of classical AG codes, quantum code constructions based on AG codes have of course been defined and studied before~\cite{chen2001some,  chen2001asymptotically, bartoli2021certain, hernando2020quantum, li2009family, la2017good, pereira2021entanglement}, in fact giving the first explicit families of good quantum codes~\cite{ashikhmin2001asymptotically}.
However, previous works did not study logical gates on these codes.

In this paper, we revisit the technique by Krishna and Tillich~\cite{krishna2019towards} of constructing a quantum code by puncturing a classical code with multiplication property and apply it to an asymptotically good family of AG codes with fixed-size alphabet. In~\Cref{sec:gate}, we give a generalization of their construction to the case of power-of-2 qudits and multi-qudit logical gates, and observe that the number of multiplications supported by the underlying classical code essentially transfers into the transversality of qudit phase gates with a corresponding degree (e.g., $CCZ$ has degree 3). In particular, we show that the constructed quantum code supports transversal implementations of the (qudit) $CCZ$ gate, and other gates that we call low-degree phase gates. In~\Cref{sec:AG}, we then instantiate this construction by modifying the code parameters of an asymptotically good classical AG code family from the book of Stichtenoth~\cite{stichtenoth2009algebraic}, giving an asymptotically good qudit quantum code family over a fixed-size power-of-2 alphabet with the mentioned transversality properties. Finally, in~\Cref{sec:qubit}, we use techniques from the area of arithmetic secret sharing scheme (in which AG codes are also ubiquitous)~\cite{cascudo2009asymptotically, cascudo2018amortized}, called multiplication-friendly embeddings, to convert the constructed qudit code into a qubit code while keeping the family asymptotically good, and furthermore, the logical $CCZ$ gate on the resulting qubit code remains transversal.

\paragraph{Application.} The constructed code family can be used in an MSD scheme with constant space overhead as follows. To distill the magic states $\ket{CCZ}=CCZ\ket{+}^{\otimes 3}$ to a target error rate $\varepsilon$, we take a code instance of parameters $[[N,K=\Theta(N),D=\Theta(N)]]$ with $N=\Theta(\log 1/\varepsilon)$ from the code family. We then prepared three code blocks, each encoding the state $\ket{+}^{\otimes K}$. Then using $\Theta(N)$ noisy $\ket{CCZ}$ states with constant noise rate (sufficiently below the code's relative distance) we transversally implement the parallel logical $CCZ$ to obtain $\ket{\overline{CCZ}}^{\otimes K}$. Performing error correction and unencoding the three code blocks, we obtain $K=\Theta(N)$ good $\ket{CCZ}$ states with error rate suppressed to $e^{-\Theta(D)}=\varepsilon$ as desired. We note that although the space overhead here is constant, we have not studied the explicit encoding circuit and efficient decoder for the code and thus the time complexity is potentially $\exp(N)=\poly(1/\varepsilon)$ for the error correction step.


\paragraph{Open questions.} We list here several open questions. First, how to perform a universal gate set fault-tolerantly on our code? Fault-tolerant universal logical implementations were provided in~\cite{aharonov1999fault} on a family of quantum Reed-Solomon codes, where the authors devised a quantum degree reduction gadget to implement non-transversal gates like Hadamard. Is there
an analogue of their degree reduction gadget for quantum AG codes? Second, can we get a qubit code directly, instead of first getting a qudit code and then performing alphabet reduction to qubits? Third, it is interesting to study applications of quantum AG codes and multiplication property in quantum cryptography, as their classical analogues  have many such applications~\cite{couvreur2021algebraic}.

\paragraph{Concurrent works.} Independently, Ref.~\cite{wills2024constantoverheadmagicstatedistillation} construct an explicit constant-overhead MSD scheme based on AG codes. Our work differs from theirs in that we obtain transversal qudit $CCZ$ gate (rather than single-qudit gate like they do), and we eventually get a qubit code with strongly transversal qubit $CCZ$, answering an open question in~\cite{wills2024constantoverheadmagicstatedistillation}. We did not explicitly describe an MSD scheme with efficient decoder like~\cite{wills2024constantoverheadmagicstatedistillation}, and their scheme also works for the weak transversality case. Similar results were also obtained concurrently by Ref.~\cite{golowich2024asymptoticallygoodquantumcodes}. Very similarly to us,~\cite{golowich2024asymptoticallygoodquantumcodes} obtain a transversal qubit $CCZ$ gate using AG codes and multiplication-friendly embeddings. They additionally apply their construction to RS codes.
Finally, Christopher A. Pattison and the author~\cite{nguyen2024} concurrently construct an MSD protocol not based on AG codes, that gives a $\log^{\gamma}(1/\varepsilon)$ spacetime overhead (rather than just space), where $\gamma \rightarrow 0$ and further apply it to construct a quantum fault tolerance scheme with low spacetime overhead.

\section{Transversal low-degree phase gates from multiplication property}\label{sec:gate}
In this section, we present a construction of a good qudit code that supports transversally what we call `low-degree' phase gates, such as $CCZ$. The construction is based on a classical code with a multiplication property. We first review some facts about finite fields, prime-power qudits, and qudit CSS codes. For more details, we refer to~\cite{aharonov1999fault, cross2008, gottesman2024}.

In this work, we consider linear classical codes, often denoted $\mcC$, and quantum CSS codes, denoted $\mcQ$, over a finite field $\mathbb{F}_q$, where $q$ is a power of 2 (although we believe this paper can be generalized to arbitrary characteristic).
\paragraph{Prime-power qudits} Consider the field $\mbF_{q}$ where $q=p^m$. Let $\omega$ be the $p$-th root of unity. For $a, b, x \in \mbF_q$, the qudit Pauli operators are defined as
\begin{align*}
    X^{(q)}_a\ket{x} = \ket{x+a}, \qquad 
    Z^{(q)}_b\ket{x} = \omega^{\tr(bx)}\ket{x},
\end{align*}
where $\tr(x) = 1 +x^2+\hdots +x^{p^{m-1}}$ is the trace map $\mbF_{p^m} \mapsto \mbF_p$, which is well-known to be $\mbF_p$-linear~\cite{mullen2013handbook}. Here, a map $f: \mbF_q \mapsto \mbF_p$ is said to be $\mathbb{F}_p$-linear if $f(ax+by)=af(x)+bf(y)$ when $a,b \in \mathbb{F}_p$. We have the commutation relation
\begin{align*}
    X^{(q)}_a Z^{(q)}_b = \omega^{\tr(ab)} Z^{(q)}_b X^{(q)}_a.
\end{align*}

In this paper, we use the superscript $(q)$ to distinguish the qudit and qubit versions of a gate. When $q=2$ we omit this notation, so $X$, $Z$, $CCZ$, etc., denote the usual qubit operators.

\paragraph{Non-Pauli qudit gates} The generalizations of $CNOT, H, CCZ,$ and Toffoli for qudits can be defined. For $p >2$, there may be more than one generalized versions~\cite{aharonov1999fault}, but from now on we only consider $\operatorname{char} \mathbb{F}_q=p=2$. In this case, we have
\begin{align*}
    CNOT^{(q)} \ket{x}\ket{y} = \ket{x}\ket{x+y}, \qquad H^{(q)}\ket{x} = \frac{1}{\sqrt{q}} \sum_{z \in \mbF_q} (-1)^{\tr(zx)} \ket{z},
\end{align*}
\begin{align*}
    CCZ^{(q)}\ket{x}\ket{y}\ket{z}= (-1)^{\tr(xyz)} \ket{x}\ket{y}\ket{z}, \qquad CCNOT^{(q)} \ket{x}\ket{y}\ket{z} = \ket{x}\ket{y}\ket{z+xy}.
\end{align*}
In this work, we also define gates which we call \emph{low-degree phase gates}. Consider the degree 3 case first, these are diagonal gates of the form
\begin{align}
    U^{(q)}_{f,g} \ket{x}\ket{y}\ket{z} = (-1)^{f(g(x,y,z))}\ket{x}\ket{y}\ket{z},
    \label{eq:low-degree-gate}
\end{align}
where $g: \mbF_q^3 \mapsto \mbF_q$ is a function involving monomials in $x,y,z$ of degree at most $3$, and $f: \mbF_q \mapsto \mbF_2$ is an $\mbF_2$-linear function. For example, $CCZ^{(q)}$ is the degree-3 gate with $g(x,y,z)=xyz$ and $f(x)= \tr(x)$. Generalizations to higher degrees $t> 3$ are the natural ones, where $g$ has degree $t$.

Given two linear codes $C_1, C_2 \in \mbF_q^n$ such that $C_2^\perp \subseteq C_1$, the quantum CSS code CSS$(C_1,C_2)$ is the span of the states of the form
\begin{align}
    \ket{c+C_2^\perp} \triangleq \frac{1}{\sqrt{|C_2^\perp|}} \sum_{c' \in C_2^\perp} \ket{c + c'}, \qquad c \in C_1.
\end{align}

Let us now describe the notion of \emph{multiplication property} from the classical coding literature. We will show in~\Cref{lem:ccz} that classical codes with this property can be used to construct quantum CSS codes supporting transversal implementations of low-degree phase gates. This idea was introduced in~\cite{krishna2019towards}, but with a different name and restricted to prime fields and single-qudit gates (corresponding to single-variable polynomials in~\Cref{eq:low-degree-gate}). We follow~\cite{couvreur2021algebraic} to go with the terminology `multiplication property', which is often used in their complexity theory and cryptography applications.

\begin{definition}[Star product] The square (star product) of a linear code over the finite field $\mathbb{F}_q$ is defined to be $\mcC^{\star 2} = \operatorname{span}_{\mathbb{F}_q} \{ c \star c': c, c' \in \mcC \}$, where $\star$ denotes componentwise multiplication over $\mathbb{F}_q$. The $t$-star product for $t\geq 3$ is defined analogously.
\end{definition}

\begin{definition}[Multiplication property, see, e.g.,~\cite{couvreur2021algebraic}]\label{def:mult-prop}
A linear code is said to satisfy the multiplication property if $\mcC^{\star 2}$ is contained in another relevant code. In particular, in this paper, we take this definition to be $\mcC^{\star 2} \subseteq \mcC^\perp$. Equivalently, this means that for any three codewords $ x,y,z \in \mcC$, we have $|x \star y \star z| \triangleq \sum_{i=1}^n x_iy_iz_i=0$ over $\mathbb{F}_q$. The $t$-multiplication property is defined analogously, but in this paper by `multiplication property' we mean the case $t=2$.
\end{definition}

The following lemma shows that a classical code with multiplication property can be used to construct a quantum code with certain transversal gates.

\begin{lemma}[Qudit code with transversal low-degree phase gates]
\label{lem:ccz}
Given a classical code $\mcC$ $[n,k,d]_{\mathbb{F}_q}$ with the multiplication property, and additionally, containing the all-1's word, we can construct a quantum code $\mcQ$ $[[N,K,D]]_{\mathbb{F}_q}$ such that the transversal qudit $(CCZ^{(q)})^{\otimes N}$ implements $(\overline{CCZ^{(q)}})^{\otimes K}$. More generally any logical degree-3 phase gates as defined in~\Cref{eq:low-degree-gate} can also be implemented transversally in a similar way. Furthermore, the quantum code parameters are $N=n-K$ and $D_X \geq d-K$ and $D_Z \geq  \operatorname{dist}(\mcC'^\perp)$, where $\mcC'$ is a `$K$-shortened' version\footnote{A punctured version $\mcC_1$ of a code $\mcC_0$ is a code obtained from $\mcC_0$ by removing some coordinates. A shortened code $\mcC_2$ of $\mcC_0$ is obtained from $\mcC_1$ by restricting to codewords whose removed coordinates are initially zero. See~\Cref{eq:shortened-code} for an example.} of $\mcC$ defined in~\Cref{eq:shortened-code}. Here $K$ can in principle take any value $\leq k$, but $\operatorname{dist}(\mcC'^\perp)$ will depend on $K$.
\end{lemma}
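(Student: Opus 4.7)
The plan is to build the CSS code by puncturing $\mcC$ at $K$ distinguished ``logical'' coordinates, extending the Reed--Solomon template of Krishna--Tillich~\cite{krishna2019towards}. The first observation is that the hypotheses already force $\mcC$ to be self-orthogonal: for any $y,z\in\mcC$, using $\mathbf{1}\in\mcC$ and the multiplication property one has $|y\star z|=|\mathbf{1}\star y\star z|=0$, so $\mcC\subseteq\mcC^\perp$. Next I would fix a set $T\subseteq[n]$ of size $K$ (chosen inside an information set of $\mcC$ whenever possible), write $S=[n]\setminus T$ and $N=n-K$, set $\mcC':=\{c\in\mcC:c|_T=0\}$, and define $\mcQ:=\mathrm{CSS}(C_1,C_2)$ with $C_1:=\mcC|_S$ the punctured code and $C_2^\perp:=\mcC'|_S$ the shortened code. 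The inclusion $C_2^\perp\subseteq C_1$ is immediate; the logical qudits are indexed by $C_1/C_2^\perp$, identified with the image of $\mcC\to\mathbb{F}_q^T$, which a suitable $T$ makes all of $\mathbb{F}_q^K$.

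The transversality claim is then a direct expansion. A logical basis state $\ket{\bar t_1}\ket{\bar t_2}\ket{\bar t_3}$ is a uniform superposition over $\ket{(\tilde a+s_1)|_S}\ket{(\tilde b+s_2)|_S}\ket{(\tilde c+s_3)|_S}$ where $\tilde a,\tilde b,\tilde c\in\mcC$ are any lifts with $T$-restrictions $t_1,t_2,t_3$ and $s_j\in\mcC'$. The physical $(CCZ^{(q)})^{\otimes N}$ introduces the phase $(-1)^{\tr(\sum_{i\in S}(\tilde a+s_1)_i(\tilde b+s_2)_i(\tilde c+s_3)_i)}$. Each of the three lifts lies in $\mcC$, so the multiplication property gives
\[
\sum_{i\in[n]}(\tilde a+s_1)_i(\tilde b+s_2)_i(\tilde c+s_3)_i=0\text{ in }\mathbb{F}_q,
\]
and since $s_j|_T=0$ and $\operatorname{char}\mathbb{F}_q=2$, the sum over $S$ collapses to $\sum_{i\in T}t_1(i)t_2(i)t_3(i)=|t_1\star t_2\star t_3|$. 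Every term of the superposition therefore receives the same phase, equal to $(-1)^{\tr(|t_1\star t_2\star t_3|)}$, which is precisely $(\overline{CCZ^{(q)}})^{\otimes K}$. The general degree-$3$ phase gate $U_{f,g}^{(q)}$ of~\Cref{eq:low-degree-gate} is handled monomial by monomial in $g$: a monomial of degree $t\le 3$ is lifted to the trilinear setting by absorbing $(3-t)$ factors of $\mathbf{1}\in\mcC$, and the $\mathbb{F}_2$-linearity of $f$ (and of $\tr$) lets one pull $f$ outside the coordinate sum.

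For the parameters, $N=n-K$ is built in. A nontrivial $X$-logical is $\tilde v|_S$ for some $\tilde v\in\mcC$ with $\tilde v|_T\neq 0$; then $\tilde v$ is a nonzero $\mcC$-codeword of weight at least $d$, and restricting to $S$ removes at most $K$ nonzero positions, giving $D_X\ge d-K$. A nontrivial $Z$-logical is a nonzero element of $(\mcC'|_S)^\perp$, giving $D_Z\ge\operatorname{dist}(\mcC'^\perp)$. The main obstacle I anticipate is not transversality (which follows formally from the multiplication property) but the control of $\operatorname{dist}(\mcC'^\perp)$: shortening interacts with dualization in a code-specific way, and no purely formal bound will be tight, so this estimate must be furnished by the concrete instantiation (the AG family in~\Cref{sec:AG}). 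A secondary, milder caveat is that $T$ must be ``generic enough'' --- lying inside an information set of $\mcC$ --- for the logical count to attain $K$ rather than fall short; this is routine for the codes considered later.
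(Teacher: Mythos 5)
Your proof is correct and follows essentially the same route as the paper's: puncture $\mcC$ on $K$ coordinates forming an information set, build $\mathrm{CSS}(\mcC|_S,\,(\mcC'|_S)^\perp)$, and use the multiplication property applied directly to the full-length codewords (with $\mathbf 1\in\mcC$ absorbing the missing factors for lower-degree monomials) to collapse the $S$-sum to the $T$-sum. Your upfront self-orthogonality remark $\mcC\subseteq\mcC^\perp$ and the coordinate-free ``lift'' phrasing are cosmetic streamlinings of the same argument the paper carries out in the matrix-block form of Eq.~\eqref{eq:puncture}, and your caveats (the $D_Z$ bound must come from the instantiation; $T$ must be an information set) match the paper's treatment.
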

\begin{proof}
    We follow the construction of Krishna and Tillich \cite{krishna2019towards}. 
    First, we express the generating matrix $G \in \mathbb{F}_q^{k \times n}$ of $\mcC$ in the following form
    \begin{align}
        G = \begin{pmatrix}
            \mathbf{1}_K & H_1\\
            0 & H_0
        \end{pmatrix},
        \label{eq:puncture}
    \end{align}
    where $K \leq k$, $H_1 \in \mathbb{F}_q^{K \times (n-K)}$, and $H_0 \in \mathbb{F}_q^{(k-K) \times (n-K)}$. Then, for any three rows $r_a,r_b,r_c$ indexed by $a, b, c \in [k]$ of $\begin{pmatrix}
        H_1 \\
        H_0
    \end{pmatrix}$ (below we slightly abuse notation and denote $a \in H_1$ to mean $a \in [K]$, and $a\in H_0$ to mean $K+1 \leq a \leq k$), the following holds
    \begin{align}
        |r_a \star r_b \star r_c| = \begin{cases}1 & \text{ if }  a=b=c \in H_1\\
        0 & \text{ otherwise}
        \end{cases}.
        \label{eq:punctured-star}
    \end{align}
    Indeed, due to the multiplication property, any three rows $R_a, R_b, R_c$ of $G$ satisfy $|R_a \star R_b \star R_c| = 0$. Let $r_a, r_b, r_c$ be the restrictions of $R_a, R_b, R_c$ onto the last $n-K$ columns. It can be readily verified that $|R_a \star R_b \star R_c| = |r_a \star r_b \star r_c|$ unless $a= b=c \in H_1$. When that is the case, we have $|R_a \star R_b \star R_c| = 1+ |r_a \star r_b \star r_c|$, implying $|r_a \star r_b \star r_c|=1$.
    
    Moreover, applying the same argument but taking $R_c$ to be the all-1's word, we obtain that
    \begin{align}
        |r_a \star r_b| = \begin{cases}
        1 & \text{ if }  a=b \in H_1\\
        0 & \text{ otherwise}
    \end{cases}.
    \label{eq:inner-prod}
    \end{align}
    Similarly, replacing both $R_b$ and $R_c$ by the all-1's word, we have
    \begin{align}
    |r_a| = \begin{cases}
        1 & \text{ if }  a \in H_1\\
        0 & \text{ otherwise}
    \end{cases}.
    \label{eq:weight}
    \end{align}
    
    It follows from the above that the rows of $H_0, H_1$ are linearly independent and $\rowspan(H_1) \cap \rowspan(H_0) = \{0\}$. Let us show this. Indeed, $H_0$'s rows are linearly independent by definition of $G$. For $H_1$, suppose $x=\sum_{a \in H_1} x_a r_a =0$ for some coefficients $x_a \in \mathbb{F}_q$, then using~\Cref{eq:inner-prod} we find that $x_a=0$ for all $a \in H_1$. Similarly, suppose there exists $x=\sum_{a \in H_1} x_a r_a = \sum_{a \in H_0} x'_a r_a \in \rowspan(H_0)$, then~\Cref{eq:inner-prod} implies $x_a=0$ and hence $x=0$, i.e., $\rowspan(H_1) \cap \rowspan(H_0) = \{0\}$.
    
    Therefore, we can define the following quantum CSS code on $N=n-K$ qudits. Each logical Z codeword $u \in \mathbb{F}_q^K$ is
    \begin{align}
        \ket{\overline{u}} =  \frac{1}{\sqrt{2^{\dim H_0}}} \sum_{h \in \rowspan(H_0)} \ket{u H_1 + h}.
    \end{align}
    In other words, the quantum code encodes $K$ qudits. And it is CSS$(\mcC'', \mcC'^\perp)$ where
    \begin{align}
        \mcC'' = \rowspan(H_0,H_1), \qquad \mcC' =\rowspan(H_0).
        \label{eq:shortened-code}
    \end{align}
    We refer to the code $\mcC'$ as a $K$-shortened version of $\mcC$, in analogy with the shortened Reed-Solomon codes.
    
    Let us now consider the action of $(CCZ^{(q)})^{\otimes N}$ on $\ket{\overline{u}}\ket{\overline{v}}\ket{\overline{w}}$ (below we suppress the normalization factor and the sum notations for brevity)
    \begin{align*}
    (CCZ^{(q)})^{\otimes N}&\ket{\overline{u}}\ket{\overline{v}}\ket{\overline{w}} \\
    &\propto \sum_{h, h', h''} \prod_{i=1}^N (-1)^{\tr((\sum_{a}u_a r_a +h)_i(\sum_{b} v_b r_b +h')_i(\sum_{c} w_c r_c +h'')_i) }  \ket{u H_1 + h}\ket{v H_1 + h'} \ket{w H_1 + h''},\\
    &\propto \sum_{h, h', h''} (-1)^{\tr( \sum_{i=1}^N (\sum_{a}u_a r_a +h)_i(\sum_{b} v_b r_b +h')_i(\sum_{c} w_c r_c +h'')_i) }  \ket{u H_1 + h}\ket{v H_1 + h'} \ket{w H_1 + h''},
    \end{align*}
    where the sums in the exponent are over rows $a, b, c \in H_1$, and we use the $\mathbb{F}_2$-linearity of the trace map. Using~\Cref{eq:punctured-star} and the $\mathbb{F}_2$-linearity of the trace map, we can simplify the sum inside the trace map in the exponent to
    \begin{align}
        \sum_{i=1}^{N} \left(\sum_{a}u_a r_a +h \right)_i \left(\sum_{b} v_b r_b +h' \right)_i \left(\sum_{c} w_c r_c +h''\right)_i= \sum_{a \in H_1} u_a v_a w_a.
        \label{eq:333}
    \end{align}
    Hence,
    \begin{align*}
        (CCZ^{(q)})^{\otimes N}\ket{\overline{u}}\ket{\overline{v}} \ket{\overline{w}} & \propto \sum_{h, h', h''}  (-1)^{\sum_{a=1}^K \tr(u_a v_a w_a)} \ket{\overline{u}}\ket{\overline{v}} \ket{\overline{w}} \\ 
        &=(\overline{CCZ^{(q)}})^{\otimes K} \ket{\overline{u}}\ket{\overline{v}} \ket{\overline{w}}.
    \end{align*}

    Next, we determine the quantum code distance. As usual, the X distance is lower bounded by the distance of the classical code $\rowspan(H_0,H_1)$, which is at least $ d- K$ because we only puncture $K$ coordinates out of a distance-$d$ code $\mcC$. The Z distance, on the other hand, is lower bounded by the distance of the dual of the $K$-shortened code $\mcC' \triangleq \rowspan(H_0)$ and we do not in general have a lowerbound for it.

    Finally, observe that in the above derivation we only used the fact that $\tr(\cdot)$ is $\mbF_2$-linear. Furthermore, the argument of $\tr(\cdot)$ can be generalized to be any polynomial $g$ over $\mbF_q$ of degree at most $3$ (for the $CCZ$ gate $g(x,y,z)=xyz$). Indeed, this is because using~\Cref{eq:inner-prod} and~\Cref{eq:weight} we can obtain the following expressions for the degree-2 and degree-1 terms that are similar to~\Cref{eq:333}:
    \begin{align*}
        \sum_{i=1}^{N} \left(\sum_{a \in H_1}u_a r_a +h \right)_i \left(\sum_{b \in H_1} v_b r_b +h' \right)_i = \sum_{a \in H_1} u_a v_a,
    \end{align*}
    \begin{align*}
        \sum_{i=1}^{N} \left(\sum_{a \in H_1}u_a r_a +h \right)_i = \sum_{a \in H_1} u_a.
    \end{align*}
    Hence, any degree-3 phase gates of the form~\Cref{eq:low-degree-gate} can be implemented transversally.
\end{proof}

\begin{remark}
    We can readily see that the proof of the above lemma can be directly generalized to higher $t$-multiplication property, such that the constructed quantum code supports degree-$(t+1)$ phase gates. We do not spell out the details of this here.
\end{remark}

\section{Good classical code with multiplication property}\label{sec:AG}
We now instantiate~\Cref{lem:ccz} using a good classical code family with the multiplication property, whereby obtaining an asymptotically good qudit CSS code with transversal low-degree phase gates in~\Cref{cor:qudit-code}. The classical code we will use belongs to the class of algebraic-geometry (AG) codes, also known as geometric Goppa codes~\cite{goppa1981codes, tsfasman1982modular}. These codes are generalizations of both Reed-Solomon and Reed-Muller codes and are often described using the language of algebraic function fields. They are known to address the growing alphabet issue of RS code and the vanishing rate of RM code. We will sum up several facts needed to describe the code construction, but ignoring some technical details (following~\cite{cascudo2018amortized}). We refer the reader to the standard textbook by Stichtenoth~\cite{stichtenoth2009algebraic} for more details about this area.

\paragraph{Function fields.} Let $\mathbb{F}_q$ be a finite field. The rational function field $\mbF_q(x)$ is the field containing all rational functions with coefficients in $\mbF_q$. An \emph{algebraic function field} (or function field) $F/\mbF_q$ is an algebraic extension of the rational function field. Associated to a function field is a non-negative integer $g(F)$ called the \emph{genus}, and an infinite set of `places' $P$, each having a degree $\operatorname{deg} P \in \mathbb{N}$. The number of places of a given degree is finite. The places of degree 1 are called \emph{rational places}. Given a function $f \in F$ and a place $P$, two things can happen: either $f$ has a pole at $P$, or $f$ can be evaluated at $P$ and the evaluation $f(P)$ can be seen as an element of the field $\mathbb{F}_{q^{\deg P}}$. If $f$ and $g$ do not have a pole at $P$, then the evaluations satisfy the rules $\lambda(f(P))=(\lambda f)(P)$ for every $\lambda \in \mathbb{F}_q$, $f(P)+g(P)=(f+g)(P)$, and $f(P) g(P) = (fg)(P)$. So if $P$ is rational and $f$ does not have a pole at $P$, then $f(P) \in \mbF_q$.

An important fact of the theory of algebraic function fields is as follows. Let $N(F)$ be the number of rational places of $F$. Then over every finite field $\mbF_q$, there exists an infinite family of function fields $\{F_i\}$ such that their genus $g(F_i)$ grows with $i$ and $\lim \frac{N(F_i)}{g(F_i)} = c_q > 0$. The largest constant $c_q$ satisfying this property (over all function field families on $\mathbb{F}_q$) is called the Ihara's constant $A(q)$ of $\mbF_q$~\cite{ihara1981some}. It is known that $A(q) \leq \sqrt{q}-1$ (Drinfeld-Vladut bound~\cite{vladut1983number}) for every finite field. When $q$ is a square, there are explicit families of funcion fields attaining $A(q)=\sqrt{q}-1$, first constructed by~\cite{garcia1995tower}. Explicit constructions in the general non-square case are given in~\cite{bassa2012towers}.

\paragraph{Divisors and Riemann-Roch space.} A \emph{divisor} $G$ is a formal sum of places, $G = \sum c_P P$ such that $c_P \in \mathbb{Z}$ and $c_P=0$ except for a finite number of $P$. The set of places where $c_P \neq 0$ is called the support, $\operatorname{supp}(G)$, of $G$. The degree of $G$ is defined to be $\deg(G) = \sum c_P \deg(P) \in \mathbb{Z}$. The divisor is said to be positive if $c_P \geq 0$ at all places, denoted as $G \geq 0$. For two divisors $G_1$ and $G_2$, we write $G_1 \geq G_2$ if $G_1 - G_2 \geq 0$.
Given a divisor $G$, one can define the \emph{Riemann-Roch space} $L(G)$, which is an $\mbF_q$-vector space of all functions in $F$ with certain prescribed poles and zeros depending on $G$. More specifically, every function $f \in L(G)$ must have a zero of order at least $|c_P|$ at the places $P$ with $c_P <0$ and $f$ can have a pole of order at most $c_P$ at the places $P$ with $c_P >0$. By definition it is easy to see that, given $f,g \in L(G)$, their product $fg$ is in $L(2G)$. The dimension of $L(G)$ is denoted as $\ell(G)= \operatorname{dim} L(G)$ and is governed by the Riemann-Roch theorem. For most coding theory purposes, it suffices to use the following version.

\begin{theorem}[Riemann-Roch]\label{lem:riemann} Let $G$ be a divisor over a function field $F$. It holds that $\ell(G)\geq \deg (G)-g(F) +1$,
and the equality holds if $\deg (G) \geq 2g(F) -1$. On the other hand, if $\operatorname{deg} (G) < 0$, then $\ell(G)=0$.
\end{theorem}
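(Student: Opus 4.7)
The statement is the classical Riemann--Roch theorem for algebraic function fields, a deep result that the paper (reasonably) cites rather than proves. The plan is to sketch the standard Weil-differentials proof as in Stichtenoth's textbook, splitting the statement into three assertions: (a) vanishing of $\ell(G)$ when $\deg(G)<0$, (b) the Riemann inequality $\ell(G) \geq \deg(G) - g(F) + 1$, and (c) equality when $\deg(G) \geq 2g(F)-1$.

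Assertion (a) is the only elementary piece and I would dispatch it first. Suppose $\deg(G)<0$ and let $f \in L(G)$ be nonzero; by definition of the Riemann--Roch space, $\operatorname{div}(f) + G \geq 0$, so taking degrees yields $\deg \operatorname{div}(f) + \deg G \geq 0$. Since every principal divisor on a function field has degree zero, this forces $\deg G \geq 0$, a contradiction, so $L(G)=\{0\}$.

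For (b) and (c) I would deduce both from the full Riemann--Roch identity $\ell(G) - \ell(W - G) = \deg G - g(F) + 1$, where $W$ is a canonical divisor of degree $2g(F)-2$. To establish it, first show that $\deg G - \ell(G)$ is uniformly bounded above as $G$ ranges over all divisors of $F$; the key input is that for any transcendental $x \in F$ the dimensions $\ell(r(x)_\infty)$ grow linearly in $r$, making $\ell(r(x)_\infty)- r\deg(x)_\infty$ bounded. Define the genus $g(F) := 1 + \sup_G (\deg G - \ell(G))$; this immediately delivers Riemann's inequality (b). Next, introduce the space of Weil differentials $\Omega_F$ and its subspace $\Omega(G) = \{\omega \in \Omega_F : (\omega) \geq G\}$, and via the pairing with the repartition (adele) space show $\dim_{\mathbb{F}_q} \Omega(G) = \ell(G) - \deg G + g(F) - 1$. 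Identifying $\Omega(G)$ with $L(W-G)$ for $W$ the divisor of any fixed nonzero Weil differential then yields the full formula. For (c), when $\deg G \geq 2g(F)-1$ one has $\deg(W-G) \leq -1$, so by part (a) $\ell(W-G)=0$ and equality in (b) follows.

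The main obstacle is the construction of Weil differentials together with the duality $\Omega(G) \cong L(W-G)$, which requires setting up adeles/repartitions on $F$ and occupies the technical core of Stichtenoth's Chapter 1. For the purposes of this paper, where Riemann--Roch is invoked only as a quotable black box on Riemann--Roch space dimensions, deferring the argument to that reference is clearly the right choice.
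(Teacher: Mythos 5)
The paper does not prove this theorem; it states it as a classical result and cites Stichtenoth's textbook, which is exactly what you do. Your sketch of the Weil-differentials route (via $\ell(G)-\ell(W-G)=\deg G - g + 1$, with part (a) handled elementarily from $\deg\operatorname{div}(f)=0$) is the standard argument from that reference and is correct, so your treatment matches the paper's.
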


For a set of rational places $\mcP=(P_1,\hdots,P_n)$ of $F$, we  often denote $D_\mcP$ to be the divisor $D_\mcP = \sum_{i=1}^n P_n$. In this section, we will always consider the case when $\operatorname{supp}(D_\mcP) \cap \operatorname{supp}(G) = \emptyset$. We can define an $\mbF_q$-linear evaluation map
\begin{align*}
    \mathsf{ev}_{D_\mcP, G}:\qquad  L(G) &\mapsto \mathbb{F}_q^n\\
     f &\mapsto (f(P_1),\hdots,f(P_n)),
\end{align*}

\begin{definition}[AG code] The linear code $C_L(D_\mcP, G)$ is the image of the Riemann-Roch space $L(G)$ under the evaluation map $\mathsf{ev}_{D_\mcP, G}$.
\end{definition}

The Riemann-Roch theorem provides us with a tool to bound AG code parameters.

\begin{lemma}[AG code parameters]
\label{lem:agcode-params} Suppose $ \deg(G) < n$. The code $C_L(D_\mcP, G)$ is a linear code of length $n= |\mcP|= \deg(D_\mcP)$ and dimension $k \geq \deg(G) +1 - g(F)$.  The equality $k = \deg(G) +1 - g(F)$ holds when $2g(F)-1\leq \deg(G) < n$. Furthermore, its distance is $d \geq n - \deg(G)$.
\end{lemma}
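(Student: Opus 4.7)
The plan is to unpack the three claims (length, dimension, distance) and reduce each to an application of Riemann–Roch (\Cref{lem:riemann}) combined with elementary facts about the evaluation map $\mathsf{ev}_{D_\mcP, G}$.

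First I would observe that the length is immediate from the definition: $C_L(D_\mcP,G)$ lives in $\mathbb{F}_q^n$ where $n = |\mcP| = \deg(D_\mcP)$ since each rational place contributes $1$ to the degree. For the dimension, the key identity is
\[
\dim C_L(D_\mcP,G) \;=\; \ell(G) \;-\; \dim \ker \mathsf{ev}_{D_\mcP,G}.
\]
A function $f \in L(G)$ lies in the kernel iff it vanishes at every $P_i \in \mcP$, which (since the $P_i$ are rational and disjoint from $\operatorname{supp}(G)$) is exactly the condition $f \in L(G - D_\mcP)$. Now $\deg(G - D_\mcP) = \deg(G) - n < 0$ by the hypothesis $\deg(G) < n$, and by the last sentence of Riemann–Roch this forces $\ell(G - D_\mcP) = 0$. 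Hence $\mathsf{ev}_{D_\mcP,G}$ is injective, so $k = \ell(G)$. Applying Riemann–Roch gives $k \geq \deg(G) + 1 - g(F)$, with equality once $\deg(G) \geq 2g(F)-1$.

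For the distance bound, I would take a nonzero codeword $c = \mathsf{ev}_{D_\mcP,G}(f)$ and let $S \subseteq \mcP$ be its zero set, of size $t = n - \operatorname{wt}(c)$. Vanishing of $f$ on $S$ means $f \in L(G - D_S)$ where $D_S = \sum_{P \in S} P$, and since $f \neq 0$ we need $\ell(G - D_S) \geq 1$. The last sentence of Riemann–Roch (contrapositive) then gives $\deg(G - D_S) \geq 0$, i.e.\ $t \leq \deg(G)$, which yields $\operatorname{wt}(c) \geq n - \deg(G)$ and hence $d \geq n - \deg(G)$.

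I do not expect a real obstacle here: both arguments hinge on the same mechanism (a function with too many prescribed zeros relative to its pole divisor must vanish), which is precisely what the $\deg < 0 \Rightarrow \ell = 0$ half of Riemann–Roch delivers. The only thing requiring a small amount of care is the injectivity step, where I need to use that $\operatorname{supp}(D_\mcP) \cap \operatorname{supp}(G) = \emptyset$ (stated right before the definition of $\mathsf{ev}$) so that evaluation at each $P_i$ is well-defined and vanishing at $P_i$ genuinely corresponds to subtracting $P_i$ from the pole divisor $G$.
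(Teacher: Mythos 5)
Your proof is correct and takes essentially the same approach as the paper: in both, injectivity of the evaluation map follows from $\deg(G - D_\mcP) < 0$ forcing $\ell(G - D_\mcP) = 0$ via Riemann--Roch, and the distance bound follows from observing that a codeword with $z$ zeros comes from a function in $L(G - D_S)$, whose nonvanishing forces $\deg(G) - z \geq 0$.
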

\begin{proof} The assumption $ \deg(G) < n$ implies that $\mathsf{ev}_{D_\mcP, G}$ is injective. Indeed, suppose there exists a nonzero function $f \in F$ that gets mapped to the all-0's codeword, then by definition we have $f \in L(G-D_\mcP)$. However, $\operatorname{deg}(G-D_\mcP)=\deg(G) - n  <0$ and thus~\Cref{lem:riemann} asserts that $\ell(G-D_\mcP)=0$. So no such $f$ exists. Hence, the code dimension $k=\ell(G)$ and follows the prescription in~\Cref{lem:riemann}.

The distance follows from a similar argument. Let $f$ be a nonzero function that gets mapped to the minimal weight codeword. Wlog, assume $f(P_i)=0$ for $i=1,..,z$ for some $z < n$. Then $f \in L(G-\sum_{i=1}^z P_i)$. By~\Cref{lem:riemann}, we require $\operatorname{deg}(G) - z \geq 0$ for such $f$ to exist. So the code distance is $d = n -z \geq n -\operatorname{deg}(G)$.
\end{proof}

From the code parameter bounds, we see that the goal is to find function fields with as many rational places as possible while keeping the genus low, hence the definition of Ihara's constant. Next, we note the following fact reflecting the behavior of the order of a pole/zero when two functions are multiplied.

\begin{fact}\label{fact:mult} For divisors $G \leq G'$, $C_L(D_\mcP, G) \subseteq C_L(D_\mcP, G')$. In addition $C_L(D_\mcP, G)  \star C_L(D_\mcP, G') \subseteq C_L(D_\mcP, G+G')$. As a corollary, $C_L(D_\mcP,G)^{\star t} \subseteq C_L(D_\mcP,tG)$.
\end{fact}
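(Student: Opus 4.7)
The plan is to reduce all three claims to elementary properties of Riemann--Roch spaces and then transport them through the evaluation map $\mathsf{ev}_{D_\mcP, G}$, which is $\mathbb{F}_q$-linear and \emph{multiplicative} at rational places disjoint from the support of the relevant divisor.

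First, for $G \leq G'$ I would verify $L(G) \subseteq L(G')$ directly from the definition: writing $G = \sum c_P P$ and $G' = \sum c'_P P$ with $c_P \leq c'_P$ at every place, any $f$ whose pole/zero orders are constrained by $G$ automatically satisfies the (weaker) constraints imposed by $G'$. Applying $\mathsf{ev}_{D_\mcP, G'}$ to both sides, and noting that $\mathsf{ev}_{D_\mcP, G}$ and $\mathsf{ev}_{D_\mcP, G'}$ agree on $L(G)$ since they are both just $f \mapsto (f(P_1),\ldots,f(P_n))$, yields $C_L(D_\mcP, G) \subseteq C_L(D_\mcP, G')$.

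Second, for the star-product containment, take $f \in L(G)$ and $g \in L(G')$. The paper already notes $fg \in L(G+G')$ (this is the statement about pole/zero orders adding under multiplication of functions). Because none of $P_1,\ldots,P_n$ lies in $\operatorname{supp}(G) \cup \operatorname{supp}(G')$, evaluation at each $P_i$ is a ring homomorphism, so
\begin{equation*}
\mathsf{ev}_{D_\mcP, G+G'}(fg) = \bigl(f(P_i)g(P_i)\bigr)_{i=1}^n = \mathsf{ev}_{D_\mcP, G}(f) \star \mathsf{ev}_{D_\mcP, G'}(g).
\end{equation*}
Hence every pairwise star product of a codeword of $C_L(D_\mcP, G)$ with a codeword of $C_L(D_\mcP, G')$ lies in $C_L(D_\mcP, G+G')$. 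The only subtlety worth flagging (and really the one place to be careful) is that $\mcC \star \mcC'$ is by definition the $\mathbb{F}_q$-\emph{span} of such products, not the raw set; but $C_L(D_\mcP, G+G')$ is $\mathbb{F}_q$-linear, so taking linear combinations stays inside it.

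Finally, the corollary follows by induction on $t$. The base case $t=1$ is trivial, and the inductive step uses associativity of componentwise multiplication to write $C_L(D_\mcP, G)^{\star t} = C_L(D_\mcP, G)^{\star (t-1)} \star C_L(D_\mcP, G)$, then applies the inductive hypothesis $C_L(D_\mcP, G)^{\star(t-1)} \subseteq C_L(D_\mcP, (t-1)G)$ together with the second claim (with divisors $(t-1)G$ and $G$) to conclude containment in $C_L(D_\mcP, tG)$. Nothing here is deep; the whole fact is a matter of bookkeeping with divisors and using that evaluation is a homomorphism away from the support of $G$.
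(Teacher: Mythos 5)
Your proof is correct and matches the approach the paper implicitly relies on: the paper states this as a Fact without a spelled-out proof, but the key observation it appeals to --- that products of functions in Riemann--Roch spaces satisfy $fg \in L(G+G')$ --- is exactly what you use, and you correctly transport it through the evaluation map and handle the span subtlety and the induction for the corollary.
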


Similar to RS codes, the dual of an AG code is another AG code.

\begin{lemma}[Theorem 2.2.7 and Proposition 2.2.10 in~\cite{stichtenoth2009algebraic}]
\label{lem:dual}
    The dual of the code $C_L(D_\mcP, G)$ is $C_L(D_\mcP, R + D_\mcP -G)$, where $R$ is a suitable canonical divisor of $F$. Furthermore, if $2g(F)-1\leq \deg(G) < n$ then $C_L(D_\mcP, G)^\perp$ has dimension $k^\perp=n+g(F)-1-\deg(G)$ and distance $d^\perp \geq \deg (G) +2 -2 g(F)$.
\end{lemma}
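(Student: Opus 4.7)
The plan is to follow the classical residue-theoretic approach, which establishes a chain of equalities $C_L(D_\mcP, G)^\perp = C_\Omega(D_\mcP, G) = C_L(D_\mcP, R + D_\mcP - G)$, where the intermediate ``differential code'' $C_\Omega$ is a convenient alternative realization of the dual. Concretely, let $\Omega_F$ be the space of Weil differentials of $F$, let $\Omega(A) = \{\omega \in \Omega_F : \omega = 0 \text{ or } (\omega) \geq A\}$ for a divisor $A$, and define $C_\Omega(D_\mcP, G)$ to be the image of $\Omega(G - D_\mcP)$ under the residue map $\omega \mapsto (\operatorname{res}_{P_1}(\omega), \ldots, \operatorname{res}_{P_n}(\omega))$.

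The first step is to show $C_\Omega(D_\mcP, G) \subseteq C_L(D_\mcP, G)^\perp$ using the residue theorem. For any $f \in L(G)$ and $\omega \in \Omega(G - D_\mcP)$, the divisor bound $(f\omega) = (f) + (\omega) \geq -G + (G - D_\mcP) = -D_\mcP$ guarantees that $f\omega$ has at most simple poles at the $P_i$ and is regular elsewhere. The residue theorem then gives $\sum_i \operatorname{res}_{P_i}(f\omega) = 0$, and because each $P_i$ is rational and $f$ is regular there, $\operatorname{res}_{P_i}(f\omega) = f(P_i)\operatorname{res}_{P_i}(\omega)$. This immediately yields the claimed orthogonality.

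Next, I would upgrade this inclusion to an equality by a dimension count. The kernel of the residue map restricted to $\Omega(G - D_\mcP)$ is $\Omega(G)$, and using the identity $\dim \Omega(A) = \ell(W - A)$ (for any canonical divisor $W$) together with Riemann-Roch applied to $W - G + D_\mcP$ and $W - G$, one computes $\dim C_\Omega(D_\mcP, G) = n - \ell(G) = n - \dim C_L(D_\mcP, G)$, forcing equality. To convert $C_\Omega$ back into the evaluation form $C_L(D_\mcP, R + D_\mcP - G)$, I would invoke the weak approximation theorem to produce a differential $\eta$ with $\operatorname{res}_{P_i}(\eta) = 1$ for every $i$, set $R := (\eta) + D_\mcP$ (which is a canonical divisor whose support avoids $\mcP$), and observe that $f \mapsto f\eta$ defines an $\mbF_q$-linear bijection $L(R + D_\mcP - G) \to \Omega(G - D_\mcP)$ which intertwines evaluation at $\mcP$ with the residue map.

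Finally, the stated dimension and distance bounds follow by applying~\Cref{lem:agcode-params} to the AG code $C_L(D_\mcP, R + D_\mcP - G)$: its evaluation divisor has degree $\deg R + n - \deg G = (2g(F) - 2) + n - \deg G$, and under the hypothesis $2g(F) - 1 \leq \deg G < n$ the divisor parameters fall into the equality regime of Riemann-Roch, giving $k^\perp = n + g(F) - 1 - \deg G$ and $d^\perp \geq \deg G + 2 - 2g(F)$. The main obstacle is the construction of the differential $\eta$ with all residues equal to $1$ and the careful bookkeeping of the associated canonical divisor $R$; the residue theorem itself is deep but standard, whereas assembling the right global differential requires weak approximation and is the step where one must be most careful.
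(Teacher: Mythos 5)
The paper does not actually prove this lemma — it cites Stichtenoth's Theorem 2.2.7 and Proposition 2.2.10 as a black box. Your proof reconstructs precisely that textbook argument (orthogonality via the residue theorem, equality by dimension count, and conversion of $C_\Omega$ to $C_L$ via a differential with prescribed residues), so the route is the right one and the final parameter computations check out.

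There is, however, a concrete slip in the construction of $R$. You set $R := (\eta) + D_\mcP$ and call it ``a canonical divisor whose support avoids $\mcP$.'' Both claims are false for that divisor: $(\eta) + D_\mcP$ is not canonical (it differs from $(\eta)$ by $D_\mcP$, which is not in general principal), and moreover the $R$ you actually need is $R := (\eta)$, whose support meets $\mcP$ (with coefficient $-1$ at each $P_i$). With your $R$, the formula $C_L(D_\mcP, R + D_\mcP - G)$ becomes $C_L(D_\mcP, (\eta) + 2D_\mcP - G)$, which has coefficient $+1$ at each $P_i$ — so $f$ could have a simple pole at $P_i$ and the intertwining $\operatorname{res}_{P_i}(f\eta) = f(P_i)$ would not even make sense. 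With $R := (\eta)$, the divisor $R + D_\mcP - G$ has coefficient $0$ at each $P_i$, so $f$ is regular on $\mcP$ and the bijection $f \mapsto f\eta$ from $L(R + D_\mcP - G)$ onto $\Omega(G - D_\mcP)$ carries evaluation to residues as you intend. Note that your final degree count already uses $\deg R = 2g(F) - 2$, which is correct only for $R = (\eta)$, so the rest of your argument is internally consistent with the corrected definition. A second, smaller point: weak approximation must be used to force $v_{P_i}(\eta) = -1$ (a simple pole) as well as $\operatorname{res}_{P_i}(\eta) = 1$; residue $1$ alone does not exclude a higher-order pole, and a higher-order pole at some $P_i$ would change $(\eta)$ there and break both the bijection and the degree computation.
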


\paragraph{Intuition for AG codes.} We briefly discuss why AG codes can improve over Reed-Solomon and Reed-Muller codes. The most intuitive way to think about AG codes is as evaluations of \emph{rational functions} on an \emph{algebraic curve} (i.e., the common roots of a collection of multivariate polynomials). In the above we have defined them using the terminology of function fields following~\cite{stichtenoth2009algebraic} since it's easier to formally write down proofs and cite the results from there that way (as we will also do below in~\Cref{thm:AG} and in~\Cref{sec:qubit}) -- but the algebraic and geometric pictures are equivalent~\cite[Section 1.6]{hartshorne2013algebraic}. Recall that RS code is defined by evaluations of polynomials over $\mathbb{F}_q$. We can think of RS as an AG code whose evaluation points are defined on a `line', with $q$ points corresponding to $\mathbb{F}_q$. RM code is a $t$-variate version of RS codes, whose evaluation points are on the `hypercube' $\mathbb{F}_q^t$. AG codes generalize RS and RM codes in the sense that we can be somewhere in between -- in this case, a curve\footnote{One may find an analogy between this discussion and the recent development from hypergraph product (HP) codes~\cite{tillich2013quantum} to balanced product (BP) codes~\cite{breuckmann2021balanced, panteleev2022asymptotically}. HP is a simple Cartesian product that undesirably involves too many qubits, and BP is a quotienting operation that reduces the number of physical qubits. For AG codes, defining an algebraic curve as the roots of a set of multivariate polynomials is in a sense quotienting the hypercube. However, we are not claiming any formal connection between the two subjects here.}.
First, we want the curve to be somewhat complicated so that we can have a large block length (i.e. many evaluation points/rational places) as we add more variables. Second, the prescriptions for pole and zero orders according to the divisor $G$ generalizes the degree constraint in RS and RM codes, explaining the familiar form of $n - \operatorname{deg}(G)$ for the AG code distance bound that is similar to RS codes. Third, the message functions are now additionally required to be evaluatable on the curve; the more complicated the curve is the more stringent this requirement is. This is reflected in the negative term in the code dimension related to the genus in~\Cref{lem:agcode-params}. Hence, we see a tension in the choice of the curve parameters in order to get a good code. The notion of genus and the Riemann-Roch theorem are a formalization of this discussion.

We now describe an AG code with the multiplication property from~\Cref{def:mult-prop}.
The following is an adaptation of the proof of Theorem 8.4.9 in~\cite{stichtenoth2009algebraic}, who derived a self-dual AG code. Here, we instead instantiate the parameters to obtain the multiplication property. 
Alternatively, we believe that other existing constructions of self-dual (or self-orthogonal) AG codes such as in~\cite{stichtenoth1988self, driencourt1989criterion, stichtenoth2006transitive} can also be adapted to obtain the multiplication property.

\begin{theorem}\label{thm:AG} For any even $m \geq 8$, there exists an explicit asymptotically good code family $\mcC$ over $\mathbb{F}_{2^m}$ with parameters $[n, \geq n/4, \geq n/4]_{\mathbb{F}_{2^m}}$ satisfying the multiplication property. Furthermore, it contains the all-1's codeword, and for $K < n/5$, its $K$-shortened version $\mcC'$ (defined in~\Cref{eq:shortened-code}) has the dual code with distance at least $n/5-K$.
\end{theorem}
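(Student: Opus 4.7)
The plan is to adapt Stichtenoth's Theorem 8.4.9 (which yields self-dual AG codes at divisor degree $(n+2g-2)/2$) by instead choosing a divisor of degree $(n+2g-2)/3$, so that the resulting AG code satisfies $\mcC^{\star 2}\subseteq\mcC^\perp$ rather than $\mcC=\mcC^\perp$. The main obstacle is not the degree arithmetic but the divisor-level comparison, since \Cref{lem:dual} involves a specific canonical divisor $R$, not just an equivalence class. To handle this, I use a Garcia-Stichtenoth tower of function fields over $\mathbb{F}_{2^m}$ attaining Ihara's constant $A(2^m)=2^{m/2}-1$; for $m\geq 8$ even this gives $A(2^m)\geq 15$, so each field $F$ in the tower has $n$ rational places and genus $g\leq n/15+o(n)$. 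This tower comes equipped (by its recursive Artin--Schreier construction) with an explicit Weil differential $\omega$ whose divisor satisfies $(\omega)+D_\mcP = 2H$, where $P_\infty$ is a distinguished rational place, $H = hP_\infty$ with $h = (n+2g-2)/2$, and $D_\mcP = P_1+\cdots+P_n$ collects the remaining rational places. I set $R=(\omega)$, $s=\lfloor(n+2g-2)/3\rfloor$, $G=sP_\infty$, and $\mcC=C_L(D_\mcP,G)$.

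Since $3G$ and $2H$ are both supported only at $P_\infty$ and $3s\leq 2h=\deg(2H)$, the divisor inequality $3G\leq 2H=R+D_\mcP$ holds. By \Cref{fact:mult} and \Cref{lem:dual},
\begin{align*}
\mcC^{\star 2} \;\subseteq\; C_L(D_\mcP, 2G) \;\subseteq\; C_L(D_\mcP, R+D_\mcP-G) \;=\; \mcC^\perp,
\end{align*}
giving the multiplication property of \Cref{def:mult-prop}. The constant function $1$ lies in $L(G)$ since $G$ is effective, so the all-1's word is in $\mcC$.

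Parameter accounting via \Cref{lem:agcode-params} and \Cref{lem:dual}: $k\geq s+1-g \geq n/3-g-O(1) \geq n/3-n/15-O(1) > n/4$, and $d\geq n-s\geq 2n/3-O(g)>n/4$, so $\mcC$ is asymptotically good of type $[n,\geq n/4,\geq n/4]_{\mathbb{F}_{2^m}}$. The dual has distance $\operatorname{dist}(\mcC^\perp)\geq s+2-2g \geq n/3-4g/3-O(1) \geq 11n/45-O(1) > n/5$. Finally, $\mcC'^\perp$ equals the $K$-punctured version of $\mcC^\perp$ (a standard fact about shortened codes), and puncturing $K$ coordinates reduces the minimum distance by at most $K$, so $\operatorname{dist}(\mcC'^\perp)\geq n/5-K$ whenever $K<n/5$. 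The parameter window is tight but survives precisely because the hypothesis $m\geq 8$ forces $A(2^m)\geq 15$; weakening this to $m\geq 6$ (for instance) would make the $k\geq n/4$ and $d^\perp\geq n/5$ constraints incompatible, which is why the theorem is stated only for $m\geq 8$.
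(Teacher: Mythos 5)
Your overall plan is aligned with the paper's: use a Garcia--Stichtenoth tower over $\mathbb{F}_{2^m}$ attaining the Drinfeld--Vladut bound, choose a divisor $G$ of degree roughly $(n+2g-2)/3$ so that $3G \leq R + D_\mcP$, and then compute parameters via Riemann--Roch. You also correctly identify the crux of the matter: \Cref{lem:dual} requires a divisor-level inequality against one \emph{specific} canonical divisor $R$, not merely an inequality of degrees up to linear equivalence.

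However, the step you use to dispose of this crux is not justified and is, in fact, not what holds for the Garcia--Stichtenoth tower. You assert that the tower ``comes equipped with an explicit Weil differential $\omega$ whose divisor satisfies $(\omega) + D_\mcP = 2H$ with $H = h P_\infty$,'' i.e., that $R + D_\mcP$ can be taken supported at a single rational place. That is a strong property (essentially a one-point description of the canonical class sitting inside the Weierstrass semigroup of $P_\infty$), and it is not the form known for this tower. What Stichtenoth actually provides (Equations 8.17--8.18, which the paper cites) is a canonical divisor of the form
\begin{align*}
R = (\ell \alpha - 2) A + (\beta - 2) B - D_\mcP,
\end{align*}
where $A$ and $B$ are \emph{positive divisors supported on several places}, disjoint from $\supp(D_\mcP)$, with $\alpha \deg(A) = \beta \deg(B) = n/(\ell - 1)$. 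Consequently $R + D_\mcP = (\ell\alpha - 2)A + (\beta - 2)B$ has two-part support, not a single-point support, and your choice $G = s P_\infty$ does not make the comparison $3G \leq R + D_\mcP$ meaningful. The paper instead chooses $G = s A$ with $s = \lfloor (\ell\alpha - 2)/3 \rfloor$, exploiting that $A$ and $B$ are both positive so the $B$-term can be dropped; this is the concrete content you'd need to supply to close the gap. (Once that choice is made, the arithmetic you carry out — dimension $\geq n/4$, distance $\geq n/4$, all-ones codeword from effectiveness of $G$, and your shortening/puncturing-duality derivation of $\operatorname{dist}(\mcC'^\perp) \geq n/5 - K$ for $K < n/5$ — all goes through, and your explanation of why $m \geq 8$ (i.e.\ $\ell \geq 16$) is forced is correct. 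Your use of the shortening--puncturing duality is also a clean alternative to the paper's more explicit identification of $\mcC'$ as another AG code, though both routes require $\operatorname{dist}(\mcC^\perp) > K$ so that puncturing kills no codeword.)
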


\begin{proof}
    Let $q=\ell^2$, where $\ell$ is a power of 2.
    Theorem 7.4.15 and Corollary 7.4.16 in~\cite{stichtenoth2009algebraic} (which in turn builds from~\cite{garcia1995tower}) states that there exists an explicit infinite sequence $\mcF=(F_i)_{i \in \mathbb{N}^+}$ of function fields over $\mathbb{F}_q$ such that $F_0= \mathbb{F}_q(x)$. For $i \geq 1$, the number of rational places grows as $n_i=\ell^{O(i)}$ and the genus of $F_i$ is
\begin{align}
    g(F_i) = 1 + \frac{n_i}{\ell-1} \left(1 - \frac{1}{\alpha_i} - \frac{1}{\beta_i}\right),
\end{align}
where $\alpha_i, \beta_i \geq 2$ are even integers that increase with $i$, whose details are not too important for us. Let the divisor corresponding to the $n_i$ rational points be $D_i =\sum_{P \in \mcP_i} P$. Moreover, according to Equations 8.17 and 8.18 in~\cite{stichtenoth2009algebraic}, there exists a choice of canonical divisor $R_i$ as in~\Cref{lem:dual}, that has the form
\begin{align}
    R_{i} = (\ell \alpha_i - 2)A_i + (\beta_i -2)B_i - D_{i},
\end{align}
where $A_i, B_i$ are positive divisors whose degrees satisfy $\alpha_i \operatorname{deg}(A_i) = \beta_i \operatorname{deg}(B_i)=n_i/(\ell-1)$ and $(\operatorname{supp} A_i \cup \operatorname{supp} B_i) \cap \supp D_i=\emptyset$.

To apply this construction, we need to choose a divisor $G_i$ for each $i$, giving an AG function code family $\mcC_i = C_L(D_i, G_i)$ with dual code $\mcC_i^\perp = C_L(D_i, R_i + D_i -G_i)$. Since $\mcC_i \star \mcC_i \subseteq C_L(D_i, 2 G_i)$ (\Cref{fact:mult}), it suffices to choose $G_i$ such that $2G_i \leq R_i +D_i -G_i$, to ensure $\mcC_i \star \mcC_i \subseteq \mcC_i^\perp$. Rearranging this to $3G_i \leq (\ell \alpha_i - 2)A_i + (\beta_i -2)B_i$, we find that $G_i$ can be chosen to be $G_i = s A_i$ where $s=\lfloor(\ell \alpha_i -2)/3\rfloor > 0$ as long as $\ell \geq 4$ (we are using the fact that both $A_i, B_i$ are positive divisors).

With the above choice, according to~\Cref{lem:agcode-params}, the dimension of $\mcC_i$ is
\begin{align}
    k_i &\geq \operatorname{deg} G_i + 1 -g(F_i)\\
    &\geq \lfloor(\ell \alpha_i -2)/3\rfloor \operatorname{deg} A_i - \frac{n_i}{\ell-1} \left(1 - \frac{1}{\alpha_i} - \frac{1}{\beta_i}\right)\label{eq:1212}\\
    &\geq \frac{n_i}{\ell-1}(\lfloor\ell/3\rfloor  - 1). \tag{using $\alpha_i \operatorname{deg}(A_i) =n_i/(\ell-1)$}
\end{align}

Choosing $\ell \geq 16$ ensures $k_i/n_i \geq 1/4$. Similarly, for $\ell \geq 16$, the distance of $\mcC_i$ is lowerbounded by
\begin{align}
    d_i &\geq n_i - \operatorname{deg}(G_i)\\
    &\geq n_i - \lfloor(\ell \alpha_i -2)/3\rfloor \operatorname{deg} A_i\\
    &\geq n_i\left(1-\frac{2\ell}{3(\ell-1)}\right) \geq n_i/4.
\end{align}

Since the divisor $G_i$ is positive, i.e., it only requires there aren't too many poles at each point in $\operatorname{supp}(G_i)$, the constant function $1$, which don't have any poles and zeroes, is in $L(G_i)$. In other words, $\mcC_i$ contains the all-1's codeword.

Finally, we lowerbound the distance of the dual of the $K$-shortened code $\mcC'_i$ of $\mcC_i$, defined in~\Cref{eq:shortened-code}. Wlog we assume the punctured coordinates correspond to the first $K$ of the rational points, as visualized in~\Cref{eq:puncture} and repeated below for convenience:
    \begin{align*}
        \begin{pmatrix}
            \mathbf{1}_K & H_1\\
            0 & H_0
        \end{pmatrix},
    \end{align*}
i.e., $\mcC_i' =\rowspan(H_0)$.
Observe that, this $K$-shortened code is another AG code, $C_L(D'_i, G'_i)$, where $D'_i = \sum_{i=K+1}^{n_i} P_i 
 \leq D_i$ and $G'_i = G_i - \sum_{i=1}^{K} P_i$. This is because $\rowspan(H_0)$ is a restriction of the elements within $L(G_i)$ to those that have at least one zeroes on each of the points $P_1,\hdots,P_K$. Since $ \operatorname{supp}(G_i) \cap \operatorname{supp}(D_i) = \emptyset$, such elements are specified exactly by the divisor $G'_i = G_i - \sum_{i=1}^{K} P_i $. Now, using~\Cref{lem:dual}, we obtain a lowerbound on the distance of $(\mcC'_i)^\perp$ as 
 \begin{align}
     \operatorname{dist}((\mcC'_i)^\perp) &\geq \deg(G'_i) +2 -2g(F_i)\\
     &\geq \lfloor(\ell \alpha_i -2)/3\rfloor \operatorname{deg} A_i - K + 2 - 2 \frac{n_i}{\ell-1} \left(1 - \frac{1}{\alpha_i} - \frac{1}{\beta_i}\right).\\
     & \geq \frac{n_i}{\ell-1}(\lfloor\ell/3\rfloor -2 ) +2 -K,
 \end{align}
 hence choosing $\ell\geq 16$ ensures $\operatorname{dist}((\mcC'_i)^\perp) \geq n_i/5 - K$.
\end{proof}

Substituting the code from~\Cref{thm:AG} into the construction in~\Cref{lem:ccz}, we obtain the following corollary.

\begin{cor}\label{cor:qudit-code} There exists an explicit qudit CSS code family of parameters $[[n - K, K, \geq n/5 - K ]]_{\mathbb{F}_{q=2^m}}$ for any $m \geq 8$, where we can take any value $K < n/5$. Furthermore, the logical operation $\overline{CCZ^{(q)}}^{\otimes K}$ can be realized  by transversal $(CCZ^{(q)})^{\otimes N}$. The same transversality statement also holds for the degree-3 phase gates defined in~\Cref{eq:low-degree-gate}.
\end{cor}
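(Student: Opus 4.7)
The plan is a direct substitution: feed the classical AG code family from~\Cref{thm:AG} into the construction of~\Cref{lem:ccz} and track the resulting quantum parameters. Since~\Cref{thm:AG} was explicitly tailored to satisfy the hypotheses of~\Cref{lem:ccz} (multiplication property, contains the all-1's word, and an explicit bound on the dual of the $K$-shortened code), the argument should be essentially mechanical.

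Concretely, I would first invoke~\Cref{thm:AG} to produce, for any even $m \geq 8$, the classical code $\mcC$ with parameters $[n, \geq n/4, \geq n/4]_{\mbF_{2^m}}$ that both satisfies $\mcC^{\star 2} \subseteq \mcC^\perp$ and contains $\mathbf{1}$. Then for any desired $K < n/5$, I would apply the construction in~\Cref{lem:ccz}: write the generating matrix of $\mcC$ in the block form of~\Cref{eq:puncture}, and define the CSS code on $N = n - K$ qudits encoding $K$ logical qudits. The transversality claims then follow verbatim from~\Cref{lem:ccz}, including the extension to arbitrary degree-3 phase gates of the form in~\Cref{eq:low-degree-gate}, since that lemma's proof only used the $\mbF_2$-linearity of the trace and the multiplication property.

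The only nontrivial bookkeeping is the distance bound. From~\Cref{lem:ccz} I get $D_X \geq d - K \geq n/4 - K$ and $D_Z \geq \operatorname{dist}(\mcC'^\perp)$, where $\mcC'$ is the $K$-shortened code. The $X$-distance bound is immediate from the fact that puncturing $K$ coordinates reduces the minimum distance by at most $K$. The $Z$-distance bound is exactly the content of the second half of~\Cref{thm:AG}, which gives $\operatorname{dist}(\mcC'^\perp) \geq n/5 - K$ for $K < n/5$. The overall code distance is $D = \min(D_X, D_Z) \geq n/5 - K$, matching the claimed parameters.

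I do not anticipate a real obstacle, since every ingredient has been set up in the preceding lemma and theorem; the corollary is essentially their composition. The one place where small care is needed is ensuring the hypothesis $K < n/5$ is the binding constraint (rather than $K \leq k$), so that the $Z$-distance bound from~\Cref{thm:AG} is applicable and remains nonnegative; this is guaranteed by the statement's assumption $K < n/5$.
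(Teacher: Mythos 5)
Your proposal is exactly what the paper does: \Cref{cor:qudit-code} is stated as a direct substitution of \Cref{thm:AG} into \Cref{lem:ccz}, and the parameter bookkeeping you carry out (combining $D_X \geq n/4 - K$ and $D_Z \geq n/5 - K$ to get $D \geq n/5 - K$) is the intended reading. The only cosmetic note is that \Cref{thm:AG} requires $m$ even, which you correctly observed, though the corollary's statement drops the word ``even.''
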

Here, we note that the transversality of other degree-3 phase gates will be used crucially in the proof of the main result in the next section.

\begin{remark}
    By inspecting the choice of the divisor $G_i$ in the proof above, we can see that it can be adapted to support higher $t$-multiplication property (at the cost of worse rate/distance). However, $t$ is limited by the field size as $t < O(\sqrt{q})$ due to, for example, the rate lowerbound in~\Cref{eq:1212} becomes negative.
\end{remark}

\section{Qudit-to-qubit code transformation}\label{sec:qubit}
In this section, we derive the main result of this paper (\Cref{thm:main}).
To obtain a qubit CSS code with transversal $CCZ$, we first look at how to convert the qudit CSS code in~\Cref{cor:qudit-code} into a qubit code, without yet worrying about the $CCZ$ gate. This question has been addressed elsewhere (see, e.g.,~\cite[Theorem 4]{grassl1999quantum} or~\cite[Section 8]{cross2008}) using \emph{self-dual bases}, which always exist for extension fields of characteristic 2~\cite{gao1993normal}. We here summarize this result and give a proof for completeness. A basis for the field $\mathbb{F}_{2^m}$ is a set of linearly independent elements $\mcB=\{\alpha_1,\hdots,\alpha_{m}\}$ such that any field element $u$ can be expressed as $x=\sum_{i} x_i \alpha_i$ where $x_i \in \mathbb{F}_2$. The basis is self-dual if $\tr(\alpha_i \alpha_j) = \delta_{ij}$. We denote by $\mcB(x) \in \mathbb{F}_2^m$ the basis expansion of an element $x$.

\begin{lemma}\label{lem:qubit-selfdual} Suppose we are given an $\mathbb{F}_{q=2^m}$-qudit CSS code $\mcQ$ of parameters $[[N,K, \geq D]]_{\mathbb{F}_{2^m}}$ based on two classical $\mathbb{F}_{2^m}$ codes $\mcC_1$ and $\mcC_2$ with $\mcC_2^\perp \subseteq \mcC_1$ and $D= \min(\operatorname{dist}(\mcC_1),\operatorname{dist}(\mcC_2))$. Then, under any self-dual basis mapping $\mathbb{F}_{2^m} \ni a \mapsto \mcB(a) \in \mathbb{F}_2^m$, we obtain a qubit CSS code with parameters $[[Nm,Km, \geq D]]$ using the classical binary codes $\mcB(\mcC_1)$ and $\mcB(\mcC_2)$. Furthermore, the qudit Paulis naturally transform as $Z^{(q)}_a \mapsto Z(\mcB(a)) \triangleq Z^{\mcB(a)_1} \otimes \hdots \otimes Z^{\mcB(a)_m}$ and similarly for the $X$ Paulis.
\end{lemma}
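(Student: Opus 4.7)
The plan is to exploit the defining property of the self-dual basis, $\tr(\alpha_i\alpha_j)=\delta_{ij}$, which yields the key identity $\tr(xy) = \mcB(x)\cdot\mcB(y)$ for all $x,y\in\mathbb{F}_{2^m}$, where the right-hand side is the ordinary $\mathbb{F}_2$ inner product. Extending $\mcB$ coordinate-wise gives an $\mathbb{F}_2$-linear bijection $\mathbb{F}_q^N \to \mathbb{F}_2^{Nm}$ that converts the trace pairing on $\mathbb{F}_q^N$ into the standard binary inner product on $\mathbb{F}_2^{Nm}$; every other claim will be a consequence of this single identity together with the $\mathbb{F}_q$-linearity of $\mcC_1,\mcC_2$. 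The Pauli correspondence is then immediate: additivity of $\mcB$ gives $X^{(q)}_a\ket{\mcB(x)}=\ket{\mcB(x)+\mcB(a)}$, matching $X(\mcB(a))$ on $m$ qubits, while $Z^{(q)}_a$ produces phase $(-1)^{\tr(ax)}=(-1)^{\mcB(a)\cdot\mcB(x)}$, which is exactly the action of $Z(\mcB(a))$.

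The main step is the duality identity $\mcB(\mcC^\perp)=\mcB(\mcC)^{\perp_{\mathbb{F}_2}}$ for any $\mathbb{F}_q$-linear $\mcC\subseteq\mathbb{F}_q^N$. Containment $\subseteq$ drops out of the basis identity summed coordinate-wise, using the standard fact that for $\mathbb{F}_q$-linear codes the Euclidean dual coincides with the trace dual (which upgrades $\tr$-orthogonality to genuine $\mathbb{F}_q$-orthogonality: if $\langle y,c\rangle \neq 0$ for some $c\in\mcC$, one scales $c$ by a suitable $\lambda\in\mathbb{F}_q$ to make the trace nonvanishing, and $\lambda c\in\mcC$ by $\mathbb{F}_q$-linearity). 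Equality then follows by dimension counting: since $\mcB$ is an $\mathbb{F}_2$-bijection, $\dim_{\mathbb{F}_2}\mcB(\mcC)=m\dim_{\mathbb{F}_q}\mcC$, and the analogous identity applied to $\mcC^\perp$ makes the two binary dimensions sum to $Nm$. Applying the identity to $\mcC_2$, the hypothesis $\mcC_2^\perp\subseteq\mcC_1$ transports to $\mcB(\mcC_2)^{\perp_{\mathbb{F}_2}}\subseteq\mcB(\mcC_1)$, so $(\mcB(\mcC_1),\mcB(\mcC_2))$ forms a valid qubit CSS pair on $Nm$ qubits encoding $m(\dim\mcC_1+\dim\mcC_2-N)=Km$ logical qubits.

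For distance, the point is that the qubit $X$-logical cosets are precisely $\mcB$ applied to the qudit $X$-logical cosets in $\mcC_1/\mcC_2^\perp$ (and similarly on the $Z$ side), and the binary Hamming weight of $\mcB(a)\in\mathbb{F}_2^{Nm}$ is at least the qudit Hamming weight of $a\in\mathbb{F}_q^N$, since each nonzero field coordinate of $a$ contributes at least one nonzero bit to $\mcB(a)$. Taking the minimum within a nontrivial coset thus gives qubit distance at least qudit distance, which is at least $D$ by assumption; the $Z$-type case is symmetric by the same argument applied to $\mcC_2$. I expect the only nontrivial step to be the duality identity $\mcB(\mcC^\perp)=\mcB(\mcC)^{\perp_{\mathbb{F}_2}}$ — once the self-dual basis identity and the Euclidean/trace-dual equivalence are in place, the remaining verifications are routine bookkeeping.
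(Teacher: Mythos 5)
Your proof is correct and follows essentially the same route as the paper: both rest on the self-dual basis identity $\tr(xy)=\mcB(x)\cdot\mcB(y)$, establish $\mcB(\mcC^\perp)=\mcB(\mcC)^{\perp}$ by a trace-pairing inclusion together with an $\mathbb{F}_2$-dimension count, and bound the distance via the observation that the binary Hamming weight of $\mcB(a)$ is at least the $\mathbb{F}_q$-Hamming weight of $a$. One small expository slip: the parenthetical argument about scaling by $\lambda$ to upgrade $\tr$-orthogonality to $\mathbb{F}_q$-orthogonality is what would prove the \emph{reverse} inclusion $\mcB(\mcC)^{\perp}\subseteq\mcB(\mcC^\perp)$ directly (making the dimension count unnecessary); the forward inclusion $\mcB(\mcC^\perp)\subseteq\mcB(\mcC)^{\perp}$, which is where you attribute it, needs only that $\langle y,c\rangle_{\mathbb{F}_q}=0$ implies $\tr\langle y,c\rangle=0$.
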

\begin{proof}
    We need three facts. The first fact is obvious and holds for any basis $\mcB$ of $\mathbb{F}_{q=2^m}$: if $\mcC \subseteq \mcC' \subseteq \mathbb{F}_q^N$ are two linear codes then $\mcB(\mcC) \subseteq \mcB(\mcC') \subseteq (\mathbb{F}_2^{m})^N$. The second fact holds when $\mcB$ is self-dual: we have $\mcB(\mcC)^\perp = \mcB(\mcC^\perp)$. Indeed, consider any two codewords $a \in (a_1,\hdots,a_N)\in \mcC$ and $b= (b_1,\hdots,b_N) \in\mcC^\perp$. We write their basis expansion as $b_i = \sum_{j=1}^m b_{i,j} \alpha_j$ and similarly for $a_i$. Because $|a \star b|=0$ over $\mathbb{F}_q$, it holds that
    \begin{align*}
        0&=\tr\left(\sum_{i=1}^N  a_i b_i\right)=\sum_{i=1}^N\sum_{j,l=1}^m a_{i,j} b_{i,l} \tr\left(  \alpha_{j} \alpha_l\right)\\
        &= \sum_{i=1}^N\sum_{j=1}^m a_{i,j} b_{i,j}= |\mcB(a)\star \mcB(b)|.
    \end{align*}
    This means $\mcB(\mcC^\perp) \subseteq \mcB(\mcC)^\perp$. This is in fact an equality because both are linear subspaces of $\mathbb{F}_2^{Nm}$ of dimension $m(N-\operatorname{dim}_{\mathbb{F}_q}(\mcC))$. The third fact is $\operatorname{dist}(\mcB(\mcC)) \geq \operatorname{dist}(\mcC)$.

    Given a quantum code CSS$(\mcC_1,\mcC_2)$ over $\mathbb{F}_q$, we can straightforwardly use the above facts to obtain a qubit code. Let $\mcC_1'=\mcB(\mcC_1)$ and $\mcC_2'=\mcB(\mcC_2)$ be the binary codes expanded in the self-dual basis. Then we have $\mcC_2'^\perp \subseteq \mcC_1'$ due to facts 1 and 2. Hence, the qubit CSS code CSS$(\mcC_1',\mcC_2')$ is well-defined. The number of physic qubits and logical qubits can be easily seen to be $Nm$ and $Km$, respectively. The qubit code distance is lowerbouned by $\min(\operatorname{dist}(\mcC_1'),\operatorname{dist}(\mcC_2')) \geq \min(\operatorname{dist}(\mcC_1),\operatorname{dist}(\mcC_2)) \geq D$ due to fact 3.

    Finally, the expansion rules for qudit Paulis can also be seen to be well-defined. Let $u, v \in \mathbb{F}_q$ and consider $Z^{(q)}_u \mapsto Z(\mcB(u)) \triangleq Z^{\mcB(u)_1} \otimes \hdots \otimes Z^{\mcB(u)_m}$ and $X^{(q)}_v \mapsto X(\mcB(v)) \triangleq X^{\mcB(v)_1} \otimes \hdots \otimes X^{\mcB(v)_m}$, then we have
    \begin{align*}
        Z(\mcB(u))X(\mcB(v)) &= (-1)^{\sum_{i=1}^m \mcB(u)_i\mcB(v)_i}X(\mcB(v)) Z(\mcB(u))\\
        &=(-1)^{\tr(uv)}X(\mcB(v)) Z(\mcB(u)), \tag{using self-duality}
    \end{align*}
    satisfying the qudit Pauli commutation relation.
\end{proof}

To convert the code into binary alphabet without losing the transversality of the (now) qubit CCZ, we use known techniques from the linear secret sharing scheme literature. First, we define the notion of reverse multiplication-friendly embedding that was defined and constructed in~\cite{cascudo2018amortized, escudero2023degree}. 

\begin{definition}[Reverse multiplication-friendly embedding (RMFE)]\label{def:rmfe}
A degree-$t$ RMFE for the vector space $\mathbb{F}_2^s$ is a triple $(m,\phi_\rmfe,\psi_\rmfe)$ where $m$ is a positive integer and where $\phi_\rmfe: \mathbb{F}_2^{s} \mapsto \mathbb{F}_{2^m}$ and $\psi_\rmfe: \mathbb{F}_{2^m} \mapsto \mathbb{F}_{2}^{s}$ are (injective and surjective, respectively) $\mathbb{F}_2$-linear maps, such that $x_1 \star x_2\star...\star x_t = \psi_\rmfe (\phi_\rmfe(x_1) ...  \phi_\rmfe(x_t))$ for all $x_1,...,x_t \in \mathbb{F}_{2}^s$.
\end{definition}

Interestingly, RMFE can be constructed using algebraic function fields. Here, we give a degree-3 generalization of the degree-2 RMFE in~\cite{cascudo2018amortized}. This construction is non-explicit, but it will suffice for our purposes since we will apply it with $s, m \in O(1)$. (In fact, a 3-RMFE with $s=1$ -- e.g., the trivial RMFE via identifying $\mathbb{F}_2$ with the elements 0, 1 of $\mathbb{F}_{2^m}$ -- will also suffice to obtain the main asymptotic behavior in~\Cref{thm:main}. However, we believe the construction below is of independent interest. It allows us to extract more logical qubits out of a qudit code -- see step 2 of~\Cref{thm:main}'s proof.)

\begin{lemma}\label{lem:rmfe} There exists a degree-3 RMFE with $m=\Theta(s)$ for infinitely many sufficiently large $s$.
\end{lemma}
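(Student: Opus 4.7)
The plan is to adapt the degree-2 RMFE construction of~\cite{cascudo2018amortized} to degree 3, using algebraic function fields over $\mathbb{F}_2$. I will take a function field $F/\mathbb{F}_2$ from a tower with $s$ rational places $P_1,\ldots,P_s$ and genus $g = O(s)$; such $F$ exists for infinitely many $s$ since Ihara's constant $A(2)$ is positive~\cite{bassa2012towers}. In addition, I select a place $Q$ of $F$ of degree $m$ satisfying $m > 3(2g + s - 1)$ and $m = \Theta(s)$; the Hasse--Weil bound guarantees that $F$ has places of every sufficiently large degree, so such $m$ can be chosen of size $\Theta(s)$ (recalling $g = O(s)$).

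Fix a divisor $G$ supported outside $\{P_1,\ldots,P_s,Q\}$ with $\deg G = 2g + s - 1$, and let $D_\mcP = P_1 + \cdots + P_s$. Since $\deg G \geq 2g - 1$, Riemann--Roch (\Cref{lem:riemann}) gives $\ell(G) = g + s$; since $\deg(G - D_\mcP) = 2g - 1$, it also gives $\ell(G - D_\mcP) = g$. Hence the evaluation map $\alpha: L(G) \to \mathbb{F}_2^s$, $f \mapsto (f(P_1),\ldots,f(P_s))$, has image of dimension $s$ and is thus surjective; fix an $\mathbb{F}_2$-linear section $\sigma$ of $\alpha$. Similarly, $\deg(3G - Q) < 0$ makes the evaluation $\beta': L(3G) \to \mathbb{F}_{2^m}$, $h \mapsto h(Q)$ (identifying the residue field at $Q$ with $\mathbb{F}_{2^m}$), injective, and $\deg(3G - D_\mcP) \geq 2g - 1$ makes the evaluation $\alpha': L(3G) \to \mathbb{F}_2^s$ at the $P_i$'s surjective.

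Define $\phi_\rmfe := \beta \circ \sigma$, where $\beta: L(G) \to \mathbb{F}_{2^m}$ is evaluation at $Q$; and define $\psi_\rmfe$ on $\Img(\beta') \subseteq \mathbb{F}_{2^m}$ by $\psi_\rmfe(\beta'(h)) := \alpha'(h)$ (well-defined since $\beta'$ is injective), then extend to an $\mathbb{F}_2$-linear surjection $\mathbb{F}_{2^m} \to \mathbb{F}_2^s$ (possible since the partial map is already surjective onto $\mathbb{F}_2^s$). For any $x_1,x_2,x_3 \in \mathbb{F}_2^s$, writing $f_j := \sigma(x_j) \in L(G)$, the ring-homomorphism property of evaluation at $Q$ gives $\phi_\rmfe(x_1)\phi_\rmfe(x_2)\phi_\rmfe(x_3) = \beta'(f_1 f_2 f_3)$, so $\psi_\rmfe(\phi_\rmfe(x_1)\phi_\rmfe(x_2)\phi_\rmfe(x_3)) = \alpha'(f_1 f_2 f_3) = (f_1(P_i) f_2(P_i) f_3(P_i))_i = x_1 \star x_2 \star x_3$, establishing the multiplication property. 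Injectivity of $\phi_\rmfe$ follows from injectivity of $\beta$ on $L(G)$ (since $\deg(G - Q) < 0$).

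The main obstacle is confirming the existence of a place of the prescribed degree $m$ in $F$; this relies on the zeta-function count of places of each degree, which is known to be positive for all sufficiently large degrees. Running through the tower of function fields $F_i$ with $s_i \to \infty$ rational places and $g_i = O(s_i)$ yields an RMFE with $m = \Theta(s_i)$ for the infinite sequence $s = s_i$, as required.
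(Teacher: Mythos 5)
Your proof is correct and takes essentially the same route as the paper: a function field over $\mathbb{F}_2$ with positive Ihara constant, a divisor $G$ of degree $s+2g-1$ supported away from the rational places, a degree-$m$ place $R$ (your $Q$) with $m>3\deg G$ so that evaluation at it is injective on $L(3G)$, and $\phi,\psi$ built from the two evaluation maps exactly as in the paper (your section $\sigma$ plays the role of the paper's subspace $W$). The only cosmetic difference is that you invoke the Hasse--Weil/zeta-function estimate for places of large degree, whereas the paper cites Stichtenoth's Theorem 5.2.10(c), which gives the explicit threshold $m\geq 4g+3$; both serve the same purpose.
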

\begin{proof} We follow Section 4 of~\cite{cascudo2018amortized}. Let $F/\mbF_2$ be a function field of genus $g$ with $s$ rational places $P_1,\hdots,P_s$. According to~\cite{xing2007algebraic}, there exists a function field family over $\mathbb{F}_2$ with Ihara's constant $A(2) \geq 97/376$, i.e., the ratio $s/g  >0$, and approaches a constant $\geq 97/376$. According to Theorem 5.2.10(c) in~\cite{stichtenoth2009algebraic}, for every function field $F/\mathbb{F}_q$ and all $m \in \mathbb{N}$ with $2g + 1 \leq q^{(m-1)/2}(\sqrt{q}-1)$, there exists a place in $F$ of degree $m$. So in particular, this holds for every $m \geq 4g+3$, regardless of $g$ and $q$.

Let $G$ be a divisor of $F$ with degree $\deg(G) = s+2g-1$ and such that $\supp(G) \cap \{P_1,\hdots,P_s\} = \emptyset$. Since $\deg(G) \geq 2g -1$,~\Cref{lem:riemann} states that $\ell(G) = \deg(G) - g +1 $. Similarly $\ell(G- \sum_{i=1}^s P_i)= \deg(G) - s -g+1$. Take a place $R$ of degree $m = 3s + 6g -2  > 3 \deg(G)$.

Consider the ($\mathbb{F}_2$-linear) evaluation map $\pi: L(G) \mapsto \mathbb{F}_2^s$ defined by $\pi(f) = (f(P_1),\hdots,f(P_s))$. Then the kernel of $\pi$ is $L(G- \sum_{i=1}^s P_i) $. Since $\dim_{\mathbb{F}_2} \Im(\pi) = \ell(G)- \ell(G- \sum_{i=1}^s P_i) = s$, $\pi$ is surjective. So we can choose a linear subspace $W$ of $L(G)$ of dimension $s$ such that $\pi$ induces an isomorphism between $W$ and $\mathbb{F}_2^s$. The RMFE embedding map is defined to be, for each $f \in W$,
\begin{align}
    \phi: \mathbb{F}_2^s \mapsto \mathbb{F}_{2^m} \qquad   \phi(\pi(f)) = f(R).
\end{align}
This definition makes sense because $\pi$ is onto $\mathbb{F}_2^s$. And we recall from~\Cref{sec:AG} that the function evaluation at a place of degree $m$ takes values in $\mbF_{2^m}$.
We need to show that $\phi$ is $\mathbb{F}_2$-linear and injective. Indeed, $\phi $ is $\mbF_2$-linear since $\pi$ is a $\mathbb{F}_2$-linear isomorphism between $W$ and $\mathbb{F}_2^s$ and the addition rule $(f+f')(R)= f(R)+ f'(R)$. Furthermore, that $\phi$ is injective follows from $\deg(R)=m > \deg(G)$. Indeed, suppose for contradiction that $\phi$ is not injective, i.e., there exist $f \neq f' \in W \subset L(G)$ such that $f(R)=f'(R)=0$. This implies $(f+f')(R)=0$ and hence $f+f' \in L(G - R)$. But $\deg(G-R) =  \deg(G) - m <0$, and thus according to~\Cref{lem:riemann}, the Riemann-Roch space $L(G - R) = \{0\}$, contradicting with the assumption $f \neq f'$.

Next we define the un-embedding map. First, define $\tau: L(3G) \mapsto \mbF_{2^m}$ as $\tau(f) = f(R)$. Similarly to the argument above with $\phi$, we can show that $\tau$ is $\mbF_2$-linear and injective since $m > \deg(3G)$. We define the map $\psi': \Im(\tau) \subseteq \mathbb{F}_{2^m} \mapsto \mathbb{F}_{2}^s$ as follows, $\psi'(f(R)) = (f(P_1),\hdots,f(P_s))$. This map is well-defined and is $\mathbb{F}_2$-linear because $\tau$ is $\mathbb{F}_2$-linear and injective. It is also surjective by a similar argument to why $\pi$ is surjective above. Finally, we linearly extend $\psi'$ from $\Im(\tau)$ to all of $\mbF_{2^m}$ to obtain the un-embedding map $\psi $. Let us verify that $(\phi, \psi )$ forms a degree-3 RMFE. For any $f_1, f_2, f_3 \in W$, we have
\begin{align*}
    \psi (\phi (\pi(f_1))  \phi (\pi(f_2)) \phi (\pi(f_3)) ) &=\psi  (f_1(R)f_2(R)f_3(R)) \\
    &=\psi  ((f_1f_2f_3)(R)) \\
    &=((f_1f_2f_3)(P_1),\hdots,(f_1f_2f_3)(P_s)) \\
    &= \pi(f_1) \star \pi(f_2) \star\pi(f_3),
\end{align*}
where we used that $(f_1f_2f_3)(R) \in \Im(\tau)$ since $f_1f_2f_3 \in L(3G)$.
\end{proof}

Next, we give the notion of multiplication-friendly embedding (MFE), which was historically defined before RMFE~\cite{cascudo2009asymptotically}. We generalize the definition in~\cite{cascudo2009asymptotically} to support more multiplications.

\begin{definition}[Multiplication-friendly embedding (MFE)]\label{def:mfe}
A degree-$t$ MFE of the extension field $\mathbb{F}_{2^m}$ over $\mathbb{F}_2$ is a triple $(r,\sigma_\mfe,\psi_\mfe)$ where $r$ is a positive integer and where $\sigma_\mfe: \mathbb{F}_{2^m} \mapsto \mathbb{F}_2^r$ and $\psi_\mfe: \mathbb{F}_2^r \mapsto \mathbb{F}_{2^m}$ are (injective and surjective, respectively) $\mathbb{F}_2$-linear maps such that the following holds. There exist $r$-element permutations $\pi_{2},\hdots,\pi_{t}$, such that for all $x_1,...,x_t \in \mathbb{F}_{2^m}$ it holds that $x_1 x_2...x_t = \psi_\mfe (\sigma_\mfe(x_1) \star \pi_2(\sigma_\mfe(x_2)) \star ...\star  \pi_t(\sigma_\mfe(x_t)))$.
\end{definition}

The construction~\footnote{It is natural to ask why this definition involves the extra permutations as compared to RMFE. In fact, in the first version of this paper, we incorrectly claimed that degree-3 MFE was possible without the permutations by generalizing a degree-2 MFE in~\cite[Theorem 9]{cascudo2009asymptotically}. However, Louis Golowich pointed out to us that there is an impossibility argument invalidating our claimed proof. We are extremely grateful for his pointer.} we give below is explicit, but asymptotically bad, i.e., $m/r \rightarrow 0$. However, it suffices for our purposes as we will apply it with $m=O(1)$.

\begin{lemma}\label{lem:mfe}
There exists a degree-3 MFE with $r= m^3$ for any $m$.
\end{lemma}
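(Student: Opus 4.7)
The plan is to give an explicit construction based on the basis-expansion identity
\begin{align*}
xyz \;=\; \sum_{i,j,k=1}^{m} x_i\, y_j\, z_k \, \alpha_i \alpha_j \alpha_k,
\end{align*}
where $\{\alpha_1,\ldots,\alpha_m\}$ is an $\mbF_2$-basis of $\mbF_{2^m}$ and $x=\sum_i x_i\alpha_i$, etc. I would fix a polynomial basis $\alpha_i=\alpha^{i-1}$ for a primitive element $\alpha$, so in particular $\alpha_1=1$; this will be used only at the very end to ensure surjectivity of the un-embedding. I would then index the $r=m^3$ coordinates of $\mbF_2^{m^3}$ by triples $(i,j,k)\in [m]^3$.

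First, I would define the embedding $\sigma_\mfe:\mbF_{2^m}\to \mbF_2^{m^3}$ by $\sigma_\mfe(x)_{(i,j,k)} = x_i$, so the $(i,j,k)$ entry depends only on the first index. This is manifestly $\mbF_2$-linear, and injective because $\sigma_\mfe(x)_{(i,1,1)} = x_i$ recovers the full basis expansion of $x$. Next I would pick the coordinate permutations $\pi_2,\pi_3$ of $[m]^3$ given by $\pi_2((a,b,c))=(b,a,c)$ and $\pi_3((a,b,c))=(c,b,a)$. Under the standard convention $(\pi v)_p = v_{\pi^{-1}(p)}$, this yields $(\pi_2 \sigma_\mfe(y))_{(i,j,k)} = y_j$ and $(\pi_3 \sigma_\mfe(z))_{(i,j,k)} = z_k$, so that the componentwise product $\sigma_\mfe(x)\star \pi_2(\sigma_\mfe(y))\star \pi_3(\sigma_\mfe(z))$ has entry $x_i y_j z_k$ at coordinate $(i,j,k)$, which is exactly the term needed in the expansion above.

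Second, I would define the un-embedding $\psi_\mfe:\mbF_2^{m^3}\to \mbF_{2^m}$ by $\psi_\mfe(v) = \sum_{(i,j,k)} v_{(i,j,k)}\,\alpha_i\alpha_j\alpha_k$, which is $\mbF_2$-linear by construction. Applying $\psi_\mfe$ to the product vector from the previous paragraph and invoking the displayed identity immediately gives $xyz$, verifying the degree-3 MFE relation of~\Cref{def:mfe}. The only remaining check is that $\psi_\mfe$ is surjective; here the basis choice $\alpha_1=1$ does the job, since $\alpha_1\alpha_1\alpha_i = \alpha_i$ shows every basis vector of $\mbF_{2^m}$ lies in the image.

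I do not anticipate a real technical obstacle here. The genuine conceptual point, flagged in the paper's own footnote preceding the lemma, is that nontrivial permutations $\pi_2,\pi_3$ are necessary: the definition of MFE forces the \emph{same} map $\sigma_\mfe$ to be applied to all three arguments, so without reshuffling indices one cannot simultaneously read off $x_i$, $y_j$, and $z_k$ at the same coordinate. Indexing by $[m]^3$ and using the two transpositions $(1\,2)$ and $(1\,3)$ on coordinate triples resolves this cleanly and gives the claimed $r=m^3$.
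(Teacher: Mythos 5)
Your construction is correct and is essentially the same as the paper's: both use the polynomial-basis expansion $xyz=\sum_{i,j,k}x_iy_jz_k\,\alpha^{i+j+k}$, embed by repeating each coordinate $m^2$ times so that the $(i,j,k)$ slot reads $x_i$, use coordinate-transposition permutations so the triple product lands $x_iy_jz_k$ at slot $(i,j,k)$, and un-embed by summing against $\alpha^{i+j+k}$. You merely write out the two permutations explicitly where the paper asserts their existence, and you argue surjectivity via $\alpha_1\alpha_1\alpha_i=\alpha_i$ where the paper fixes $y=z=1$; these are cosmetic differences.
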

\begin{proof}
    We follow the idea of Theorem 9 in~\cite{cascudo2009asymptotically} by using polynomial bases of finite fields~\cite{mullen2013handbook}.
    Let $\alpha \in \mathbb{F}_{2^m}$ be a primitive element, i.e., $\alpha$ is a root of a degree-$m$ irreducible polynomial over $\mathbb{F}_2$, such that $\{1,\alpha,\hdots,\alpha^{m-1}\}$ is a basis for $ \mathbb{F}_{2^m}$. An element $x$ is decomposed as $x= \sum_{i=0}^{m-1} x_i \alpha^i$. Now, we define the embedding map as $\sigma(x) = (\underbrace{x_0,\hdots,x_0}_{m^2 \text{ times}}, \hdots \underbrace{x_{m-1}, \hdots, x_{m-1} }_{m^2 \text{ times}})$. Clearly this map is $\mathbb{F}_2$-linear and injective. We index the components of $\sigma(x)$ by $(\sigma(x))_{i,(j,k)}$ for $0 \leq i,j, k \leq m-1$, where $i$ labels which group of components and $(j,k)$ labels the elements in that group. In other words, $(\sigma(x))_{i,(j,k)} = x_i$ for all $j,k$.

    Given three field elements, $x,y,z \in \mathbb{F}_{2^m}$, we have $xyz= \sum_{i,j,k=0}^{m-1}  x_i y_j z_k \alpha^{i+j+k} $.  Observe that $x_i y_j z_k=(\sigma(x))_{i,(j,k)}  (\sigma(y))_{j,(i,k)} (\sigma(z))_{k,(i,j)}$. Hence, there exist $m^3$-element permutations $\pi$ and $\pi'$ such that
    $(\sigma(x) \star \pi(\sigma(y)) \star \pi'(\sigma(z)))_{i,(j,k)}=x_iy_jz_k$. It follows that the $\mathbb{F}_2$-linear map $\psi: \mathbb{F}_2^{m^3} \mapsto \mathbb{F}_{2^m}$ defined by $\psi(u) \triangleq \sum_{i,j,k} u_{i,(j,k)} \alpha^{i+j+k}$  satisfies $\psi(\sigma(x)\star \pi(\sigma(y)) \star \pi'(\sigma(z)))=xyz$. Finally, $\psi$ can be seen to be surjective by taking $y,z$ to be the identity element and varying $x$ over all elements of $\mathbb{F}_{2^m}$.
\end{proof}

We now use the MFE and RMFE constructed above to prove the main result. The idea of this construction is as follows. Consider the qudit code from~\cref{cor:qudit-code} on which physical qudit $CCZ$ gates realize logical qudit $CCZ$ gates. The RMFE is used to deal with the `logical end', i.e., we will restrict the logical space on each qudit to a subspace that simulates qubits. The MFE, on the other hand, handles the `physical end'. It turns out they interact nicely thanks to the fact that both are $\mathbb{F}_2$-linear and the crucial fact that the qudit code supports general degree-3 phase gates from~\Cref{eq:low-degree-gate} (not just the qudit $CCZ$).

\begin{theorem}[Main result]
\label{thm:main}
There exists an explicit quantum CSS code family on qubits with parameters $[[N, K=\Theta(N), D=\Theta(N)]]$ on which the logical gate $\overline{CCZ}^{\otimes K}$ gate is realized by a transversal application of CCZ gates on a constant fraction of the physical qubits.
\end{theorem}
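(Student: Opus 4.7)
The plan is to combine three ingredients already assembled in the paper: the asymptotically good qudit code $\mcQ$ of~\Cref{cor:qudit-code} over $\mathbb{F}_{2^m}$, which crucially supports transversal implementations of the \emph{entire} family of degree-3 phase gates $U^{(q)}_{f,g}$ from~\Cref{eq:low-degree-gate} (not only $CCZ^{(q)}$); the degree-3 RMFE $(m, \phi_\rmfe, \psi_\rmfe)$ of~\Cref{lem:rmfe}; and the degree-3 MFE $(r=m^3, \sigma_\mfe, \psi_\mfe, \pi_2, \pi_3)$ of~\Cref{lem:mfe}. I would fix an even constant $m\geq 8$ large enough that all three ingredients are available simultaneously; then $s = \Theta(m)$ and $r = m^3$ are constants, so the resulting qubit code will have $\Theta(N)$ physical qubits, $\Theta(N)$ logical qubits, and $\Theta(N)$ distance.

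\paragraph{Logical end via RMFE.}
I would restrict each of the $K$ logical qudits of $\mcQ$ to the image of $\phi_\rmfe$, thereby identifying $Ks = \Theta(N)$ effective logical qubits across the three code blocks. For three logical qudits in states $\ket{\phi_\rmfe(u_a)}, \ket{\phi_\rmfe(v_a)}, \ket{\phi_\rmfe(w_a)}$ with $u_a, v_a, w_a \in \mathbb{F}_2^s$, the target parallel $\overline{CCZ}^{\otimes Ks}$ picks up the phase $\prod_a (-1)^{|u_a \star v_a \star w_a|}$. Applying the RMFE identity and the $\mathbb{F}_2$-linearity of $\psi_\rmfe$ rewrites this as $\prod_a (-1)^{f(\phi_\rmfe(u_a)\phi_\rmfe(v_a)\phi_\rmfe(w_a))}$, where $f(t) \triangleq |\psi_\rmfe(t)|$ is an $\mathbb{F}_2$-linear map from $\mathbb{F}_{2^m}$ to $\mathbb{F}_2$. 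Thus $\overline{CCZ}^{\otimes Ks}$ coincides with the logical qudit degree-3 phase gate $U^{(q)}_{f,\, xyz}$ applied to each logical qudit triple; by~\Cref{cor:qudit-code}, it is realized transversally by the same physical qudit gate on each of the $N$ physical qudit triples.

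\paragraph{Physical end via MFE.}
I would then encode each physical qudit as $r$ physical qubits by $x \mapsto \sigma_\mfe(x) \in \mathbb{F}_2^r$ in the first code block, and apply the MFE permutations $\pi_2, \pi_3$ as qubit relabelings on the second and third code blocks respectively. The MFE identity $xyz = \psi_\mfe(\sigma_\mfe(x) \star \pi_2 \sigma_\mfe(y) \star \pi_3 \sigma_\mfe(z))$ and the $\mathbb{F}_2$-linearity of $f \circ \psi_\mfe : \mathbb{F}_2^r \to \mathbb{F}_2$ yield
\begin{equation*}
    f(xyz) = \sum_{l \in S} (\sigma_\mfe(x))_l\,(\pi_2 \sigma_\mfe(y))_l\,(\pi_3 \sigma_\mfe(z))_l,
\end{equation*}
where $S \subseteq [r]$ is the support of the linear functional $f \circ \psi_\mfe$. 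Under the chosen relabelings, the physical qudit gate $U^{(q)}_{f,\, xyz}$ on each qudit triple becomes \emph{exactly} the product of $|S|$ position-wise qubit $CCZ$s, giving depth-1 transversal qubit $CCZ$'s on a constant fraction $3|S|/r$ of the $3Nr$ physical qubits.

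\paragraph{Main obstacle.}
The main technical obstacle is verifying that the MFE-based expansion of $\mcQ$ is a bona fide qubit CSS code with $\Theta(N)$ distance. Unlike the self-dual basis expansion of~\Cref{lem:qubit-selfdual}, $\sigma_\mfe$ is only an injection into a strictly larger space and not a basis isomorphism, so the clean duality identity $\mcB(\mcC)^\perp = \mcB(\mcC^\perp)$ from the proof of~\Cref{lem:qubit-selfdual} does not transfer. I would circumvent this by defining the binary $X$-code as $\mcC_1' = \sigma_\mfe^{\,N}(\mcC_1)$ and taking the binary $Z$-code so that $(\mcC_2')^\perp = \sigma_\mfe^{\,N}(\mcC_2^\perp)$; the CSS inclusion $(\mcC_2')^\perp \subseteq \mcC_1'$ then follows from $\mcC_2^\perp \subseteq \mcC_1$ and monotonicity of $\sigma_\mfe^{\,N}$. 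The $X$- and $Z$-distances are bounded below by $w_\mfe\cdot D = \Theta(N)$, where $w_\mfe > 0$ is the minimum Hamming weight of $\sigma_\mfe$ on nonzero inputs. The number of unrestricted logical qubits is $mK$, which the RMFE restriction narrows to $sK = \Theta(N)$; a final small check is that the $\mathbb{F}_2$-linear preparation of $\ket{\phi_\rmfe(u)}$ on the logical qudit end is implementable as a CSS-code logical Clifford, which holds since $\phi_\rmfe$ is $\mathbb{F}_2$-linear.
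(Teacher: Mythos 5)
Your proposal follows the same overall strategy as the paper: start from the qudit code of~\Cref{cor:qudit-code} and exploit its transversality for \emph{all} degree-3 phase gates (not only $CCZ^{(q)}$), use the RMFE on the logical side with $f$ chosen as the parity of $\psi_\rmfe$, and use the MFE on the physical side. The one genuine difference is the qudit-to-qubit conversion: the paper passes through a self-dual basis expansion (\Cref{lem:qubit-selfdual}) and then concatenates with the inner code $Q_\mfe$, whereas you define the binary CSS code directly by $\mcC_1' = \sigma_\mfe^{N}(\mcC_1)$ and $(\mcC_2')^\perp = \sigma_\mfe^{N}(\mcC_2^\perp)$. This shortcut is legitimate: it yields the same code the concatenation produces (the CSS inclusion holds by injectivity and linearity of $\sigma_\mfe$), and it shows the self-dual basis step is not strictly necessary, which is a nice observation.

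There is, however, a gap in your distance claim. You assert both ``distances are bounded below by $w_\mfe \cdot D$''. This argument is valid for $\mcC_1' = \sigma_\mfe^{N}(\mcC_1)$, whose nonzero codewords are coordinate-wise $\sigma_\mfe$-images and hence pick up a factor $w_\mfe$ per nonzero $\mathbb{F}_q$-coordinate. But it does not apply to $\mcC_2' = (\sigma_\mfe^{N}(\mcC_2^\perp))^\perp$: this is a dual code living in all of $\mathbb{F}_2^{Nr}$, its elements are \emph{not} $\sigma_\mfe$-images of anything, and a vector in $\mcC_2'$ can perfectly well restrict to a weight-$1$ string within an $r$-bit block. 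The correct bound on that side is $\geq \operatorname{dist}(\mcC_2) \geq D$ \emph{without} the factor $w_\mfe$: if $v \in \mcC_2' \setminus (\mcC_1')^\perp$, then $((\sigma'_\mfe)^T)^{N}(v)$ is the (dual) basis expansion of some $c \in \mcC_2 \setminus \mcC_1^\perp$, which has at least $D$ nonzero $\mathbb{F}_q$-coordinates; each such coordinate forces $v$ to be nonzero on the corresponding $r$-bit block, so $\operatorname{wt}(v) \geq D$. The conclusion $D = \Theta(N)$ therefore still holds, but not for the reason you state; you should replace the uniform $w_\mfe \cdot D$ bound with this block-counting argument on the $\mcC_2'$ side. (This is precisely the subtlety the paper sidesteps by first passing to a self-dual basis, where $\mcB(\mcC)^\perp = \mcB(\mcC^\perp)$ makes both distance directions transfer cleanly, and then treating the $\sigma_\mfe$ step as code concatenation with an inner $[[r,m,1]]$ code.)
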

\begin{proof}
    We take the $\mathbb{F}_{q=2^m}$-qudit code family from~\Cref{cor:qudit-code} and choose $m$ (a constant) such that there exists a degree-3 RMFE with corresponding parameter $m$. Let us refer to this $\mathbb{F}_{2^m}$-qudit code family as $\mcQ_0$ with parameters $[[N_0,K_0=\Theta(N_0),D_0=\Theta(N_0)]]_{\mathbb{F}_{2^m}}$. We consider 3-qudit phase gates of the form
    \begin{align}
        U^{(q)}_f \ket{x}\ket{y}\ket{z} = (-1)^{f(xyz)} \ket{x}\ket{y}\ket{z}, \qquad x,y,z \in \mathbb{F}_{2^m},
    \end{align}
    for some $\mathbb{F}_2$-linear function $f: \mathbb{F}_{2^m} \mapsto \mathbb{F}_2$, such as the trace map, but later we crucially will take it to be a function relevant to $\psi_\rmfe$. According to~\Cref{cor:qudit-code}, $(U^{(q)}_f)^{\otimes N_0}$ implements $ (\overline{U^{(q)}_f})^{\otimes K_0}$ on $\mcQ_0$.
    
    We will convert $\mcQ_0$ into the claimed qubit code in multiple steps and keep track of how the code parameters and the qudit transversal gate $(U^{(q)}_f)^{\otimes N_0}=(\overline{U^{(q)}_f})^{\otimes K_0}$ change.
    
    \paragraph{Step 1:} We apply the mapping from~\Cref{lem:qubit-selfdual} to obtain a qubit CSS code $\mcQ_1$ with parameters $[[N_1=N_0m, K_1=K_0m, D_1 \geq D_0]]$. Under this mapping, each $\mathbb{F}_{2^m}$-qudit (physical or logical) becomes a group of $m$ qubits. For later use, let us denote the logical Pauli Z operator corresponding to logical qubit $j \in [m]$ in group $i \in [K_0]$ as $\overline{Z}_{i,j}$. Below, for $\bs u=(u_1,\hdots,u_{N_0}) \in \mathbb{F}^{N_0}_{2^m}$, we use the notation $\mcB(\bs u) \triangleq (\mcB(u_1),\hdots ,\mcB(u_{N_0}) \in (\mathbb{F}_2^m)^{N_0})$, and similarly when we replace $N_0$ by $K_0$.
    
    Let us see how $U_f^{(q)}$ transforms. The following diagonal $3m$-qubit unitary effects $U^{(q)}_f$
    \begin{align}
        V_f \ket{\mcB(x)} \ket{\mcB(y)} \ket{\mcB(z)} \triangleq (-1)^{f(xyz)} \ket{\mcB(x)} \ket{\mcB(y)} \ket{\mcB(z)}, \qquad \forall x,y,z \in \mathbb{F}_{2^m}.
    \end{align}
    We will concern about how to implement $V_f$ later. For now, we observe that this gate is (block-) transversal on $\mcQ_1$. For any $\bs u, \bs v, \bs w \in \mathbb{F}_{2^m}^{K_0}$,
    \begin{align}
        V_f^{\otimes N_0} \ket{\overline{\mcB(\bs u)}}_{\mcQ_1} \ket{\overline{\mcB(\bs v)}}_{\mcQ_1} \ket{\overline{\mcB(\bs w)}}_{\mcQ_1} &=\overline{V_f}^{\otimes K_0} \ket{\overline{\mcB(\bs u)}}_{\mcQ_1} \ket{\overline{\mcB(\bs v)}}_{\mcQ_1} \ket{\overline{\mcB(\bs w)}}_{\mcQ_1} \\
        &= (-1)^{\sum_{i=1}^{K_0} f(u_i v_i w_i)} \ket{\overline{\mcB(\bs u)}}_{\mcQ_1} \ket{\overline{\mcB(\bs v)}}_{\mcQ_1} \ket{\overline{\mcB(\bs w)}}_{\mcQ_1}.
    \end{align}
    
    \paragraph{Step 2:} Next, for each mentioned group of $m$ logical qubits of $\mcQ_1$, we hardcode a subset of $m-s$ logical qubits to $\ket{\overline{0}}$, yielding a new CSS code $\mcQ_2$. The goal of this step is to handle the `logical end' of the qudit-to-qubit code transformation.
    Specifically, consider the $2^s$ field elements in the image of the RMFE, $\Im(\phi_\rmfe)$. Because $\phi_\rmfe$ is $\mathbb{F}_2$-linear, under the basis $\mcB$ of~\Cref{lem:qubit-selfdual}, these elements in $\Im(\phi_\rmfe)$, when viewed as vectors in $\mathbb{F}_2^m$, form an $s$-dimensional linear subspace in the computational basis of $\mathbb{F}_2^m$. For each $i \in K_0$, we take appropriate combinations of $\{\overline{Z}_{i,j}\}_{i \in [m]}$ (possibly including $\pm 1$ signs) to form a maximal group stabilizing the logical computational basis states corresponding to the linear subspace $\mathbb{F}_2^m \backslash \mcB(\Im(\phi_\rmfe))$ within that block $i$ of logical qubits. Then, adding those Pauli Z operators into the stabilizer group of $\mcQ_1$ yields a CSS code $\mcQ_2$ with parameters $[[N_2=N_0m, K_2 = K_0s , D_2 \geq D_0 ]]$. The new code $\mcQ_2$ encodes $K_0$ groups of $s$ logical qubits. 
    Importantly, this step does not change the (block-)transversality of $V_f$ since $V_f$ is diagonal and thus commutes with the added $Z$ stabilizers.
    
    We define a logical computational basis of $\mcQ_2$ as follows. For $\bs v \in (\mathbb{F}_2^s)^{K_0}$, we refer to the entries of $\bs v$ by $v_{i,j}$ for $i \in [K_0]$ and $j \in [s]$. We denote by $v_{i,\cdot}$ the vector $(v_{i,1},\hdots,v_{i,s}) \in \mathbb{F}_2^s$. And $\phi_\rmfe(\bs v)  \triangleq (\phi_\rmfe(v_{1,\cdot}),\hdots,\phi_\rmfe(v_{K_0,\cdot})) \in \mathbb{F}_{2^m}^{K_0}$. The logical computational basis states of $\mcQ_2$ are, for $\bs v \in (\mathbb{F}_2^s)^{K_0}$,
    \begin{align}
        \ket{\overline{\bs v}}_{\mcQ_2} \triangleq \ket{\overline{\mcB(\phi_\rmfe(\bs v))}}_{\mcQ_1}.
    \end{align}
    Then, the action of $V_f^{\otimes N_0}$ is
    \begin{align}
        V_f^{\otimes N_0} \ket{\overline{\bs u}}_{\mcQ_2} \ket{\overline{\bs v}}_{\mcQ_2} \ket{\overline{\bs w}}_{\mcQ_2}= (-1)^{\sum_{i=1}^{K_0} f(\phi_\rmfe(u_{i,\cdot}) \phi_\rmfe(v_{i,\cdot} )
        \phi_\rmfe(w_{i,\cdot}))} \ket{\overline{\bs u}}_{\mcQ_2} \ket{\overline{\bs v}}_{\mcQ_2} \ket{\overline{\bs w}}_{\mcQ_2}.
    \end{align}

We now specify $f$ to be $f(a) \triangleq \sum_{l=1}^s (\psi_\rmfe(a))_l \mod 2$ for $a \in \mathbb{F}_{2^m}$, i.e., the parity of $\psi_\rmfe(a) \in \mathbb{F}_2^s$. Since $\psi_\rmfe$ is $\mathbb{F}_2$-linear, $f$ is $\mathbb{F}_2$-linear, indeed satisfying the transversality requirement of~\Cref{lem:ccz}. Using the RMFE property, we then have $f(\phi_\rmfe(u_{i,\cdot}) \phi_\rmfe(v_{i,\cdot} )
        \phi_\rmfe(w_{i,\cdot}))= \sum_{j=1}^s u_{i,j} v_{i,j} w_{i,j}$. Therefore,
\begin{align}
    V_f^{\otimes N_0} \ket{\overline{\bs u}}_{\mcQ_2} \ket{\overline{\bs v}}_{\mcQ_2} \ket{\overline{\bs w}}_{\mcQ_2}
    &=  (-1)^{\sum_{i=1}^{K_0} \sum_{j=1}^{s} u_{i,j} v_{i,j} 
        w_{i,j}} \ket{\overline{\bs u}}_{\mcQ_2} \ket{\overline{\bs v}}_{\mcQ_2} \ket{\overline{\bs w}}_{\mcQ_2} \\
    &= \overline{CCZ}^{\otimes K_0s} \ket{\overline{\bs u}}_{\mcQ_2} \ket{\overline{\bs v}}_{\mcQ_2} \ket{\overline{\bs w}}_{\mcQ_2}.
    \label{eq:block-transversal}
\end{align}

\paragraph{Step 3:} In this final step, we use MFE to handle the `physical end' of the qudit-to-qubit code transformation. So far we have constructed a qubit code $\mcQ_2$ on which $\overline{CCZ}^{\otimes K_2}$ is realized by the (block-) transversal $V_f^{\otimes N_0}$. We now convert $\mcQ_2$ into a code $\mcQ_3$ to achieve full transversality. Our goal is to concatenate $\mcQ_3$ with a classical code such that the following $3m$-bit (physical) operation
\begin{align}
    V_f: \ket{\mcB(x)} \ket{\mcB(y)} \ket{\mcB(z)} \mapsto (-1)^{f(xyz)} \ket{\mcB(x)} \ket{\mcB(y)} \ket{\mcB(z)}, \qquad \forall x,y,z \in \mathbb{F}_{2^m}
    \label{eq:Vf}
\end{align}
can be emulated by transversal CCZ gates. This is done via the MFE from~\Cref{lem:mfe}. In particular, the embedding $\sigma_\mfe: \mathbb{F}_{2^m} \mapsto \mathbb{F}_2^r$ is an $\mathbb{F}_2$-linear map in a polynomial basis of $\mathbb{F}_{2^m}$ over $\mathbb{F}_2^m$, i.e., the basis in the proof of~\Cref{lem:mfe} has the form $(1,\alpha,\hdots,\alpha^{m-1})$, where $\alpha$ is a primitive element. This basis could be different from the self-dual basis in~\Cref{lem:qubit-selfdual}. However, any two linearly independent bases are related via an invertible matrix $A \in \mathbb{F}_2^{m \times m}$. So the map $\sigma'_\mfe: \mathbb{F}_2^m \mapsto \mathbb{F}_2^r$ defined by $\sigma'_\mfe(\mcB(x)) = \sigma_\mfe (x)$, for $x \in \mathbb{F}_{2^m}$, is linear and injective. In other words, $\Im(\sigma'_\mfe)$ is nothing but a binary linear code which we denote as $C_\mfe$. By definition, it holds for all $x,y,z \in \mathbb{F}_{2^m}$ that
\begin{align}
    \psi_\mfe(\sigma'_\mfe(\mcB(x)) \star \pi_2(\sigma'_\mfe(\mcB(y))) \star \pi_3(\sigma'_\mfe(\mcB(z)))) = xyz,
\end{align}
where $\pi_2, \pi_3$ are the permutations from the definition of MFE.

Thus,
\begin{align}
    f(xyz) &= f(\psi_\mfe(\sigma'_\mfe(\mcB(x)) \star \pi_2(\sigma'_\mfe(\mcB(y))) \star \pi_3(\sigma'_\mfe(\mcB(z)))))\\
    &= \sum_{l=1}^s (\psi_\rmfe(   \psi_\mfe(\sigma'_\mfe(\mcB(x)) \star \pi_2(\sigma'_\mfe(\mcB(y))) \star \pi_3(\sigma'_\mfe(\mcB(z))))) )_l \mod 2\\
    &= \sum_{l=1}^s (M  (\sigma'_\mfe(\mcB(x)) \star \pi_2(\sigma'_\mfe(\mcB(y))) \star \pi_3(\sigma'_\mfe(\mcB(z)))))_l \mod 2\\
    &= \sum_{i=1}^{r} \left( \sum_{l=1}^s M_{l,i} \mod 2 \right) (\sigma'_\mfe(\mcB(x)))_i (\sigma'_\mfe(\mcB(y)))_{\pi_2^{-1}(i)} (\sigma'_\mfe(\mcB(z)))_{\pi_3^{-1}(i)} \mod 2,
\end{align}
where $M \triangleq \psi_\rmfe \circ \psi_\mfe \in \mathbb{F}_2^{s \times r}$ is a linear map because $\psi_\rmfe$ and $\psi_\mfe$ are $\mathbb{F}_2$-linear.

In other words, the quantum operation $V_f$ in~\Cref{eq:Vf} can be implemented transversally by CCZ gates once we have encoded each $m$-bit register into the $[r,m,\cdot]$ code $C_\mfe$ (the distance is unspecified since it's not too important for us), which can be viewed as quantum code $Q_\mfe$ of parameters $[[r,m,1]]$. Define $P \in \mathbb{F}_2^r$ where $P_i = \sum_{l=1}^s M_{l,i} \mod 2$, then
\begin{align}
     CCZ(P) \ket{\overline{\mcB(x)}}_{Q_\mfe} \ket{\overline{\mcB(y)}}_{Q_\mfe} \ket{\overline{\mcB(z)}}_{Q_\mfe} = (-1)^{f(xyz)} \ket{\overline{\mcB(x)}}_{Q_\mfe} \ket{\overline{\mcB(y)}}_{Q_\mfe} \ket{\overline{\mcB(z)}}_{Q_\mfe},
     \label{eq:mfe-transversal}
\end{align}
where $CCZ(P)$ applies CCZ on the triplets of the qubits $i,\pi_2^{-1}(i),$ and $\pi_3^{-1}(i)$, each from a register, whenever $P_i=1$, and $\ket{\overline{\mcB(x)}}_{Q_\mfe} \triangleq \mcE_{Q_\mfe} \ket{\mcB(x)}$ with $\mcE_{Q_\mfe}$ being the encoding isometry of $Q_\mfe$.

Concatenating $\mcQ_2$ with $Q_\mfe$ we obtain the desired code $\mcQ_3 = \mcQ_2 \circ Q_\mfe$, which is still a CSS code. It has parameters $[[N_3= N_0r, K_3=K_0 s, D_3 \geq D_0]]$. Its logical computational basis states are, for $\bs v \in (\mathbb{F}_2^s)^{K_0}$,
\begin{align}
        \ket{\overline{\bs v}}_{\mcQ_3} \triangleq (\mcE_{Q_\mfe})^{\otimes N_0}\ket{\overline{\bs v}}_{\mcQ_2},
\end{align}
Finally, combining~\Cref{eq:block-transversal} and~\Cref{eq:mfe-transversal} we can verify that
\begin{align}
    (CCZ(P))^{\otimes N_0} \ket{\overline{\bs u}}_{\mcQ_3} \ket{\overline{\bs v}}_{\mcQ_3} \ket{\overline{\bs w}}_{\mcQ_3}= \overline{CCZ}^{\otimes K_0s} \ket{\overline{\bs u}}_{\mcQ_3} \ket{\overline{\bs v}}_{\mcQ_3} \ket{\overline{\bs w}}_{\mcQ_3},
\end{align}
concluding the proof.
\end{proof}

\section*{Acknowledgements}
We thank Anurag Anshu, Fernando Brandao, Niko Breuckmann, Thiago Bergamaschi, Alex Dalzell, Jonas Haferkamp, Joe Iverson, Rachel Zhang for useful discussions and pointer to useful references. We especially thank Chris Pattison for extensive discussions regarding codes with multiplication property, from which this work started. We also thank the authors of~\cite{wills2024constantoverheadmagicstatedistillation} for suggesting us to include more details on how constant-space-overhead MSD follows from our codes, which we inattentively omitted in our version 1. This work was done in part while the author was visiting the Simons Institute for the Theory of Computing, supported by NSF QLCI Grant No. 2016245, and finalized while he was an intern at the AWS Center for Quantum Computing. The author is grateful for the hospitality of both places. The author acknowledges support from the NSF Award
No. 2238836 via Anurag Anshu and support from the Harvard Quantum Initiative.

\bibliographystyle{unsrt}
\bibliography{bibliography.bib}

\appendix




\end{document}